\ifpdf\setlength{\pdfpagewidth}{8.5in}\setlength{\pdfpageheight}{11in}\fi
\DeclareMathAlphabet{\mathcal}{OMS}{cmsy}{m}{n}
\newtheorem{theorem}{Theorem}
\newcommand{\eat}[1]{}
\begin{document}
\renewenvironment{proof}{\noindent{\bf Proof:}}{\hspace*{\fill}\rule{6pt}{6pt}\bigskip}

\title{External-Memory Multimaps}
\author{Elaine Angelino \thanks{To appear in ISAAC 2011. A preliminary version of this work appears as a Brief Announcement at SPAA 2011.}  \\[6pt]
SEAS,  Harvard University\\
elaine@eecs.harvard.edu 
\and 
Michael T. Goodrich \thanks{Supported in part by the National Science
Foundation under grants 0724806, 0713046, and 0847968, and by the
Office of Naval Research under MURI grant N00014-08-1-1015.}\\[6pt]
Department of Computer Science, 
University of California, Irvine,\\
goodrich@acm.org
 \and 
 Michael Mitzenmacher\thanks{Supported in part by 
the National Science Foundation under grants 0915922 and 0964473.} \\[6pt]
SEAS, Harvard University\\
michaelm@eecs.harvard.edu 
 \and 
 Justin Thaler\thanks{Supported by the Department of Defense (DoD) through the National Defense Science \& Engineering Graduate Fellowship (NDSEG) Program, and partially by NSF grant CNS-0721491.} \\[6pt]
SEAS, Harvard University\\
jthaler@seas.harvard.edu 
}

\date{}
\maketitle 

\vspace{-0.7 cm}

\begin{abstract}
Many data structures support dictionaries, also known as maps or
associative arrays, which store and manage a set of key-value pairs.
A \emph{multimap} is generalization that allows multiple values to be
associated with the same key.  
For example, the inverted file data structure that is used
prevalently in the infrastructure supporting search engines is a
type of multimap, where words are used as keys and document pointers
are used as values.
We study the multimap abstract data
type and how it can be implemented efficiently online in 
external memory frameworks, with constant expected I/O 
performance.  
The key technique used to
achieve our results is a combination of cuckoo hashing using buckets that hold
multiple items with a multiqueue implementation to
cope with varying numbers of values per key.
Our external-memory results are for the standard two-level memory
model.
\end{abstract}



\section{Introduction}

A \emph{multimap} is a simple abstract data type (ADT) that generalizes the 
map ADT to support key-value associations in a way that allows
multiple values to be associated with the same key.
Specifically, it is a dynamic container, $C$, of key-value
pairs, which we call \emph{items}, 
supporting (at least) the following operations:
\begin{itemize}
\item
insert$(k,v)$: insert the key-value pair, $(k,v)$. This operation
allows for there to be existing key-value pairs having the same
key $k$, but we assume w.l.o.g. that the particular key-value pair
$(k,v)$ is itself not already present in $C$.
\item
isMember$(k,v)$: return true if 
the key-value pair, $(k,v)$, is
present in $C$.
\item
remove$(k,v)$: remove the key-value pair, $(k,v)$, from $C$.
This operation
returns an error condition if $(k,v)$ is not currently in $C$.
\item
findAll$(k)$: return the set of all key-value pairs
in $C$ having key equal to $k$.
\item
removeAll$(k)$: remove from $C$ all key-value pairs
having key equal to $k$.
\item
count$(k)$: Return the number of values associated with key $k$.
\end{itemize}

Surprisingly, we are not familiar with any previous discussion of this
abstract data type in the theoretical algorithms and data structures
literature.  Nevertheless, abstract data types equivalent to the above
ADT, as well as multimap implementations, are included in the C++
Standard Template Library (STL)~\cite{ms-tstrc-95}, Guava--the Google Java
Collections 
Library\footnote{\url{http://code.google.com/p/google-collections/}}, 
and the Apache Commons Collection 3.2.1 
API\footnote{\url{http://commons.apache.org/collections/apidocs/index.html}}.  
Clearly, the existence of these implementations provides
empirical evidence for the usefulness of this abstract data type.

\subsection{Motivation}
\label{sec:motivation}
One of the primary motivations for studying the multimap ADT is that
associative data in the real world can exhibit significant
non-uniformities with respect to the relationships between keys and
values.  For example, many real-world data sets follow 
a power law with respect to data frequencies indexed by rank.
The classic description of this law is that in a corpus of
natural language documents, defined with respect to $n$ words,
the frequency, $f(j,n)$, of the word of rank $j$ is predicted to be
\[
f(j,n) = \frac{1}{{j^s}{H_{N,s}}} ,
\]
where $s$ is a parameter characterizing the distribution and
$H_{n,s}$ is the $n$th generalized harmonic number.
Thus, if we wished to construct a data structure that can be used
to retrieve all instances of any query word, $w$, in such a corpus, 
subject to insertions and deletions of documents, then we could
use a multimap, but would require one that could handle large skews
in the number of values per key.  
In this case, the multimap could be viewed
as providing a dynamic functionality for
a classic static data structure, known as an \emph{inverted file}
or \emph{inverted index} (e.g., see Knuth~\cite{k-ss-73}).
Given a collection, $\Gamma$, of documents, an inverted file is an
indexing strategy that allows one to list, for any word $w$,
all the places in $\Gamma$ where $w$ appears.

Another powerful motivation for studying multimaps is graphical
data \cite{bb-cdvk-08}.  A multimap can represent a graph: keys
correspond to nodes, values correspond to neighbors, findAll
operations list all neighbors of a node, and removeAll operations
delete a node from the graph.  The degree distribution of many
real-life graphs follow a power law, motivating efficient handling of
non-uniformity.  

As a more recent example, static multimaps were used for a geometric
hashing implementation on graphical processing units
in \cite{asasmoa-rtphg-09}.  In this setting, signatures are computed from
an image, and a signature can appear multiple times in an image.  The
signature is a key, and the values correspond to locations where the
signature can be found.  Geometric hashing allows one to find query
images within reference images.  Dynamic multimaps could allow for
changes in reference images to be handled dynamically without
recalculating the entire structure.

There are countless other possible scenarios where we expect 
multimaps can prove useful.  In many settings, one can indicate the
intensity of an event or object by a score.  Examples include the
apparent brightness of stars (measured by stellar magnitudes), the
intensity of earthquakes (measured on the Richter scale), and the
loudness of sounds (measured on the decibel scale).  Necessarily,
when data from such scoring frameworks is labelled as key-value pairs
where the numeric score is the key, some scores will have
disproportionally many associated values than others.  In fact, in
assigning numeric scores to observed phenomena, there is a natural
tendency for human observers to assign scores that depend
logarithmically on the stimuli. This perceptual pattern is so common
it is known as the
\emph{Weber--Fechner Law}~\cite{Dehaene2003145,h-vdiwf-24}.
Multimaps may prove particularly effective for such data sets.

\subsection{Previous Related Work}

Inverted files have standard applications in 
text indexing (e.g., see Knuth~\cite{k-ss-73}),
and are important data structures
for modern search engines (e.g., see Zobel and Moffat~\cite{zm-iftse-06}).
Typically, this is a static structure and the collection $\Gamma$ is
usually thought of as all the documents on the Internet.
Thus, an inverted file is a static multimap
that supports the findAll($w$) operation (typically with a cutoff for
the most relevant documents containing $w$).

Cutting and Pedersen~\cite{cp-odiim-90} describe an inverted file
implementation that uses B-trees for the indexing structure and
supports incremental and batched insertions, but it doesn't support
deletions efficiently.
More recently, Luk and Lam~\cite{ll-eimef-07} describe an in-memory
inverted file implementation based on hash tables with chaining, but
their method also does not support fast deletions.
Likewise,
Lester {\it et al.}~\cite{lmz-eoict-08,lzw-eoimc-06} 
and B\"{u}ttcher {\it et al.}~\cite{bcl-himgt-06}
describe out-of-core
inverted file implementations that support insertions only.
B\"{u}ttcher and Clarke~\cite{bc-itvqt-05},
on the other hand, 
consider the trade-offs for allowing for
both insertions and deletions in an inverted file, 
and Guo {\it et al.}~\cite{gcxw-eoli-07}
describe a solution for performing such operations
by using a type of B-tree.

Our work utilizes a variation on cuckoo hash tables.  We assume the reader
has some familiarity with such hash tables, as originally presented by Pagh
and Rodler \cite{pr-ch-04}.\footnote{A general description can be found on Wikipedia
at \url{http://en.wikipedia.org/wiki/Cuckoo_hashing}.}  
We describe the relevant
background in Section~\ref{sec:map}.  

Finally, recent work by Verbin and Zhang \cite{vz} shows that in the external memory model,
for any dynamic dictionary data structure with query cost $O(1)$, the expected amortized cost of updates 
must be at least 1. As explained below, this implies our data structure is optimal up to constant factors. 
 
\subsection{Our Results}

In this paper we describe efficient 
external-memory
implementations of the multimap ADT.  
Our external-memory algorithms are
for the standard two-level I/O model, which captures the memory hierarchy of
modern computer architectures (e.g.,
see~\cite{av-iocsr-88,DBLP:reference/algo/Vitter08}). In this model, there is a cache of size $M$ connected to a disk of
unbounded size, and the cache and disk are divided into blocks,
where each block can store up to $B$ items. 
Any algorithm can only operate on cached data,
and algorithms must therefore make memory transfer operations,
which read a block from disk into cache or vice versa.
The cost of an algorithm is the number of I/Os required,
with all other operations considered free.
All of our time bounds hold even when $M=O(B)$,
and we therefore omit reference to $M$
throughout. 

We support an online implementation of the multimap
abstract data type, where each operation must 
completely finish executing (either in terms of its data structure updates or
query reporting) prior to our beginning execution of any subsequent
operations.
The bounds we achieve for the multimap ADT methods are shown in
Table~\ref{tbl:bounds}. All bounds are \emph{unamortized}. 

\begin{table}[hbt]
\begin{center}
\begin{tabular}{|c|c|}
\hline
\textbf{Method} & \textbf{I/O Performance} \\
\hline
\rule[-8pt]{0pt}{22pt} insert$(k,v)$ & $\bar O(1)$ \\
\hline
\rule[-8pt]{0pt}{22pt} isMember$(k,v)$ & $O(1)$ \\
\hline
\rule[-8pt]{0pt}{22pt} remove$(k,v)$ & $O(1)$ \\
\hline
\rule[-8pt]{0pt}{22pt} findAll$(k)$ & $O(1+n_k/B)$ \\
\hline
\rule[-8pt]{0pt}{22pt} removeAll$(k)$ & $O(1)$ \\
\hline
\rule[-8pt]{0pt}{22pt} count$(k)$ & $O(1)$ \\
\hline
\end{tabular}
\end{center}
\caption{\label{tbl:bounds} Performance bounds for our multimap
implementation. $\bar O(*)$ denotes an expected bound.
Also, we use
$B$ to denote the block size, $N$ to denote the number of key-value
pairs, 
and
$n_k$ to denote the number of key-value pairs with key equal to $k$. 
}
\end{table}

Our constructions are based on the combination of two 
external-memory data structures---external-memory cuckoo hash tables
and multiqueues---which may be of independent interest.
We show that external-memory cuckoo hashing supports
a cuckoo-type method for insertions that 
provably requires only an expected constant number
of I/Os. 
We then show that this performance can be combined with
expected constant I/O complexity for multiqueues to design a multimap
implementation that has constant (unamortized) worst-case or expected
I/O performance for most methods.
Our methods imply that one can maintain an inverted file in
external memory so as to support a constant 
expected number of I/Os for
insertions and worst-case constant I/Os for look ups and
item removal.

\section{External-Memory Cuckoo Hashing}
\label{sec:map}
In this section, we describe external-memory
versions of cuckoo hash tables with multiple items per bucket.  
The implementation we describe in this section is for the map ADT,
where all key-value pairs are distinct. We show later in this paper
how this approach can be used in concert with multiqueues to support
multiple key-value pairs with the same key for the multimap ADT.

Cuckoo hash tables that can store multiple items per bucket have been
studied previously, having been introduced
in \cite{dw-badtpcs-07}.  Generally the analysis has been
limited to buckets of a constant size, $d$, where here size is
measured in terms of the number of items, which in this context is a
key-value pair in our collection, $C$.  For our external-memory cuckoo
hash table, each bucket can store $B$ items, where $B$ is a 
parameter defining our block size and is not necessarily a constant.

Formally, let ${\cal T}=(T_0,T_1)$ be a cuckoo hash table such that
each $T_i$ consists of $\gamma n/2$ buckets, where each bucket stores a
block of size $B$, with $n=N/B$.  (In the original cuckoo hash table setting,
$B$ = 1.)  One setting of particular interest
is when $\gamma = 1 + \epsilon$ for some (small) $\epsilon > 0$, so
that space overhead of the hash table is only an $\epsilon$ factor
over the minimum possible.  The items in $\cal T$ are indexed by keys
and stored in one of two locations, $T_0[h_0(k)]$ or $T_1[h_1(k)]$,
where $h_0$ and $h_1$ are random hash functions.  (The assumption that
the hash functions are random can be done away with using suitable
realistic hash functions; see for
example \cite{dw-badtpcs-07} for a discussion,
or \cite{mv-wshfw-08} for an alternative model.)

It should be clarified that, in some settings, the use of a cuckoo
hash function may be unnecessary or even unwarranted.  Indeed, if $B >
c \log n$ for a suitable constant $c$ and $\gamma = 1+\epsilon$, we
can use simple hash tables, with just one choice for each item,
instead.  In this case, with Chernoff and union bounds one can show
that with high probability all buckets will fit all the items that
hash to it, since the expected number of items per bucket will then be
$B/(1+\epsilon)$, and $B$ is large enough for strong tail bounds to
hold.  Cuckoo hashing here allows us to avoid such ``wide block
assumptions'', giving a more general approach.  In practice, also,
across the full range of possible values for $B$ we expect cuckoo
hashing to be much more space efficient.  Whether this space savings
is important may depend on the setting.

The important feature of the cuckoo hashing implementation is the way
it may reallocate items in $\cal T$ during an insertion.  Standard
cuckoo hashing, with one item per bucket, immediately evicts the
previous (and only) item in a bucket when a new item is to be inserted
in an occupied bucket.  With multiple items per bucket, there is a
choice available.  We describe what is known in this setting, and how
we modify it for our use here.

Let $G$ be the \emph{cuckoo graph}, where each bucket in $\cal T$ is a
vertex and, for each item $x$ currently in $C$, we connect
$T_0[h_0(x)]$ and $T_1[h_1(x)]$ as a directed edge, with the edge
pointing toward the bucket it is not currently stored in.  Suppose we
wish to insert an item $x$ into bucket $X$ in $\cal T$.  If $X$
contains fewer than $B$ items, then we simply add $x$ to $X$.
Otherwise, we need to make room for the new item.  

One approach for doing an insertion is to use a breadth first search
on the cuckoo graph.  The results of Dietzfelbinger and Weidling show
that for sufficiently large {\em constant} $B$, the expected insertion
time is constant.  Specifically, when $\gamma = 1 + \epsilon$ and
$B \geq 16 \ln (1/\epsilon)$, the expected time to insert a new key is
$(1/\epsilon)^{O(\log \log (1/\epsilon))}$, which is a constant.
(This may require re-hashing all items in very rare cases when an
item cannot be placed; the expected time remains constant.) Notice
that if $B$ grows in a fashion that is $\Omega(1)$, then a breadth
first search approach does not naturally take constant expected time,
as even the time to look if items currently in the bucket can be moved
will take $\Omega(B)$ time.  (It might still take constant expected
time -- it may be that only a constant number of buckets need to be
inspected on average -- but it does not appear to follow
from \cite{dw-badtpcs-07}.)

For non-constant $B$, we can apply the following mechanism:
we can use our buckets to mimic having $B/c$ distinct subtables for
some large constant $c$, where the $i$th subtable uses the $ci/B$th
fraction of the bucket space, and each item is hashed into a specific
subtable.  For $B = O(n^\delta)$ for $\delta < 1$, each subtable will contain
close to its expected number of items with high probability.  Further,
by choosing $c$ suitably large one can ensure that each subtable is
within a $1+\epsilon$ factor of its total space while maintaining an
expected $\log(1/\epsilon)^{O(\log \log (1/\epsilon))}$ insertion
time.  Specifically, we have the following theorem:
\begin{theorem} \label{thm:cuckoo}
Suppose for a cuckoo hash table ${\cal T}$ the block size satisfies
$B=\Omega(1)$ and $B=O(n^\delta)$ for $\delta < 1$.  Let $0 < \epsilon \leq 0.1$ be
arbitrary, let $C$ be a collection of $N$ items, and let $T$ be a
table with at least $(1+\epsilon)N/B$ blocks.  Suppose further we have
$B/c$ subtables, with $c = 16 \ln(1/\epsilon)$ with each item hashed
to a subtable by a fully random hash function, and the hash functions
for each subtable are fully random hash functions.
Finally, suppose the items of $C$ have been stored in ${\cal T}$ by
an algorithm using the partitioning process described above and the cuckoo hashing
process.  Then the expected time for the insertion of a new item $x$ using a BFS is
$(1/\epsilon)^{O(\log \log (1/\epsilon))}$.
\end{theorem}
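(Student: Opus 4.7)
The plan is to reduce the analysis to the constant-bucket-size result of Dietzfelbinger and Weidling applied separately to each of the $B/c$ subtables. Each subtable is, by construction, an ordinary cuckoo hash table with bucket size exactly $c = 16\ln(1/\epsilon)$ and total capacity $(1+\epsilon)Nc/B$ items (since the full table has $(1+\epsilon)N/B$ blocks, and each block contributes $c$ slots to the subtable). An I/O in the external-memory table corresponds to reading one physical block, which contains exactly one subbucket of the targeted subtable; hence a BFS insertion within a subtable that inspects $t$ subbuckets incurs $t$ I/Os. So if each subtable behaves like an ordinary cuckoo table with load factor bounded away from $1$, the result of \cite{dw-badtpcs-07} immediately gives expected BFS cost $(1/\epsilon)^{O(\log\log(1/\epsilon))}$.

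The first step is therefore to control the random imbalance across subtables. Since each item chooses its subtable under a fully random hash function, the expected number of items assigned to a given subtable is $\mu = Nc/B$. Because $B = O(n^\delta)$ with $\delta < 1$, we have $\mu = \Omega(n^{1-\delta})$, which is polynomial in $n$. A Chernoff bound then implies that the actual load of every subtable lies within a $(1+\epsilon/3)$ factor of $\mu$ with probability $1 - o(1/n^2)$ (taking a union bound over the $B/c$ subtables, which number at most $B = O(n^\delta)$). On this good event, each subtable contains at most $(1+\epsilon/3)Nc/B$ items in a capacity of $(1+\epsilon)Nc/B$ slots, so its effective load factor is bounded by $1/(1+\epsilon')$ for some $\epsilon' = \Theta(\epsilon)$, and the Dietzfelbinger--Weidling hypothesis on bucket size ($c \geq 16\ln(1/\epsilon')$) is met.

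Second, condition on the good event and apply \cite{dw-badtpcs-07} to the relevant subtable to obtain expected insertion time $(1/\epsilon)^{O(\log\log(1/\epsilon))}$. The insertion of the new item $x$ is routed to one subtable determined by its subtable hash, and then proceeds via BFS on the cuckoo graph of that subtable; since subtable buckets correspond to physical blocks, the number of I/Os equals the number of vertices explored, which matches the cited bound.

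The main obstacle is to argue that the bad event (some subtable overloaded beyond its $(1+\epsilon)$ capacity, or failing to admit a cuckoo placement) contributes negligibly to the expectation. This follows because the bad event has probability $o(1/n^2)$ from the Chernoff bound, while the cost of a global rehash on failure is $\poly(n)$; the product is $o(1)$ and is absorbed into the $(1/\epsilon)^{O(\log\log(1/\epsilon))}$ bound. Combining the two cases gives the claimed expected I/O complexity for a single insertion.
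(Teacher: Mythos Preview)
Your proposal is correct and follows essentially the same approach as the paper: bound each subtable's load via a Chernoff bound, apply the Dietzfelbinger--Weidling constant-bucket result to the relevant subtable, and absorb the rare overload/failure cases into the expectation via a rehash whose cost times its probability is $o(1)$. One minor quantitative slip: since $n=N/B$, the mean load per subtable is $\mu=Nc/B=nc=\Theta(n)$, not merely $\Omega(n^{1-\delta})$; this only strengthens your Chernoff step and does not affect the argument.
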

\begin{proof}
Each subtable has the capacity to hold $(1+\epsilon)Nc/B$ items, and will
receive an expected $Nc/B$ items to store.  Let $X$ be the number of items
in the first subtable.  A standard Chernoff bound 
(e.g., \cite{mu-pcrap-05}[Theorem 4.4])  gives that 
$X$ is at most $(1+\epsilon/3)Nc/B$ with probability
bounded by
$$\Pr\left ( X \geq \left ( 1 + \frac{\epsilon}{3} \right) \frac{Nc}{B} \right) \leq e^{-Nc\epsilon^2/(27B)}.$$
With $B=O(n^\delta)$ for $\delta < 1$, we see that 
all subtables have at most $(1+\epsilon/3)Nc/B$ with probability subexponential in $N^{1-\delta}$.  By keeping
counters for each subtable, we can re-hash the items of {\em all} subtables in the rare case where a subtable
exceeds this number of items without affecting the expected insertion time by more than an $o(1)$ term.

The proof follows from Theorem 2 of \cite{dw-badtpcs-07}, by
noting that each subtable has space for at least
$(1+\epsilon/2)(1+\epsilon/3)Nc/B < (1+\epsilon)Nc/B$ items.  (In rare cases where an insertion fails,
we can re-insert all items {\em in a subtable} without affecting the expected insertion time by more than an $o(1)$ term.)
\end{proof}

It is likely this result could be improved
(see the remarks in \cite{dw-badtpcs-07}), but it is sufficient for
our purposes of showing that there is an insertion method for
external-memory cuckoo tables that uses a constant expected number of I/Os.

As noted in \cite{dw-badtpcs-07}, a more practical approach
is to use {\em random walk cuckoo hashing} in place of breadth first
search cuckoo hashing.  (For example, random walk cuckoo hashing is
used in all experiments in \cite{dw-badtpcs-07}.)  With
random walk cuckoo hashing, when an item cannot be placed, it kicks
out a single item in the bucket chosen uniformly at random.  Random
walk cuckoo hashing avoids the potentially large rare memory overhead
required of breadth first search, allowing instead a nearly stateless
solution.

More specifically, suppose a bucket $X$ is full when placing an item
$x$.  To reallocate items, we perform a random walk on the buckets,
starting from $X$, to find an augmenting path that has the net effect
of freeing up a location in $X$ (for $x$) while maintaining the
two-choice allocation rule for all the existing items in $C$.  Let
$Y$ denote the current node we are visiting in our random walk (which
is associated with a full bucket in the external-memory cuckoo
table---initially, the bucket $X$).  To identify the next node to
visit, we choose one of the items, $y$, in $Y$, uniformly at
random.  We then remove $y$ from $Y$ and insert the item $x$ waiting
to be inserted in $Y$.  We then let $y$ take over the role of $x$, and
attempt to place $x$ in the other bucket that is a possible
location for this item.  We repeat this process until we find a
non-full bucket or reach a pre-defined stopping condition.

For loads arbitrarily close to one, it is not known if there is a random walk cuckoo hashing scheme using
two bucket choices and
multiple items per bucket that similarly achieves expected constant
insertion time and logarithmic insertion time with high probability.
(This is given as an open question in \cite{dw-badtpcs-07}.)
Sadly, we do not resolve this question here.

However, for loads up to about $2/3$ we can utilize results by Panigrahy \cite{p-ehltma,p-hss-06} to
obtain such a random
walk cuckoo hashing scheme.  In Theorem~{2.3.2} of \cite{p-hss-06}, he
shows that for hash tables for $t$ items and load factors of $s$
satisfying $(2s)(1-e^{-2s}) < 1$, when the bucket size is 2, random
walk cuckoo hashing will succeed in inserting an item with a path
of length $O(\log t)$ with probability $1-O(1/t^2)$; his argument also
shows that this process has expected constant insertion time.  This
allows loads up to (approximately) $2/3$ using our partitioning technique
above.  In practice, we might expect this load to be improved
significantly in various ways.  First, we might ignore the partitioning,
and instead perform the random walk directly on the buckets with load
$B$.  Analyzing this process is difficult, in part
because of the greatly increased possibility of cycles in the cuckoo
graph.  Alternatively, we could perform the partitioning but allow the
random walk to stop early if there is room in the block $B$, rather
than the bucket for the corresponding subtable, effectively
multiplexing the bucket over subtable instantiations. We consider
these multiple variations in the simulations of Section \ref{sec:experiments}.

Finally, we point out that, as in a standard cuckoo hash table,
item look ups and removals use a worst-case constant number of I/Os.

\section{External-Memory Multimaps}
\label{sec:multimap}
In this section, we describe an extension of the external-memory cuckoo
hash table (as described in Section~\ref{sec:map})
that can be used to maintain a multimap in external
memory, so as to support fast dynamic access of a massive
data set of key-value pairs where some keys may have many associated
values.

\subsection{The Primary Structure}
To implement the multimap ADT, we begin with a primary structure that
is an external-memory cuckoo hash table storing just the set of keys.
In particular,
each record, $R(k)$, in $\cal T$, is associated with a specific key, $k$, and
holds the following fields:
\begin{itemize}
\item
the key, $k$, itself
\item
the number, $n_k$, of key-value pairs in $C$ with key equal to $k$ 
\item
a pointer, $p_k$, to a block $X$ 
in a secondary table, ${\cal S}$,
that stores items in $C$ with key equal to $k$. 
Let $n_k$ denote the number of key-value pairs in $C$ with key equal to $k$. If $n_k< B$, then $X$
stores all the items with key equal to $k$ (plus possibly some items with
keys not equal to $k$).
Otherwise, if $n_k\ge B$, then $p_k$ points to a \emph{first} block
of items with key equal to $k$, with the other blocks of such items
being stored elsewhere in $\cal S$.
\end{itemize}

This secondary storage is an external-memory data structure we
are calling a \emph{multiqueue}.

\subsection{An External-Memory Location-Aware Multiqueue}
\label{sec:multiqueue}
\subsubsection{Overview}
The secondary storage
that we need in our construction is a
way to maintain a set $\cal Q$ of queues in external memory.
We assume
the \emph{header} pointers for these queues are 
stored in an array, $\cal T$,
which in our external-memory multimap construction is the
external-memory cuckoo hash table described above.

For any queue, $Q$, we wish to support the following operations:
\begin{itemize}
\item
enqueue($x,H$):
add the element $x$ to $Q$, given a pointer to its header, $H$. 
\item
remove($x$): remove $x$ from $Q$.
We assume in this case that each $x$ is unique.
\item
isMember($x$): determine whether $x$ is in some queue, $Q$.
\end{itemize}

In addition, we wish to maintain all these queues in a space-efficient
manner, so that the total storage is proportional to their total size.
To enable this, we store
all the blocks used for queue elements in a secondary table,
$\cal S$, of blocks of size $B$ each.
Thus, each header record, $H$ in $\cal T$, points to a block in $\cal S$.

Our intent is to store each queue $Q$ as a doubly-linked
list of blocks from $\cal S$.  Unfortunately, some queues in $\cal Q$ are too
small to deserve an entire block in $S$ dedicated to storing their elements.
So small queues must share their first block of storage with other small
queues until they are large enough to deserve an entire block of storage
dedicated to their elements.
Initially, all queues are assumed to be empty; hence, we initially
mark each queue as
being \emph{light}.
In addition, the blocks in $\cal S$ are initially empty; hence, we link the
blocks of $\cal S$ in a consecutive fashion as a doubly-linked list and
identify this list as being the \emph{free list}, $F$, for $\cal S$.

We set a heavy-size threshold at $B/3$ elements. When a queue $Q$ stored in a block $X$ reaches
this size, we allocate a block from $\cal S$ (taking a block off the free
list $F$) 
exclusively to store elements of $Q$ and we mark $Q$ as \emph{heavy}. 
Likewise, to avoid wasting space as
elements are removed from a queue, we require any heavy queue $Q$ to
have at least $B/4$ elements. If a heavy queue's size falls below this
lower threshold, then we mark $Q$ as being light again
and we force $Q$ to go back to sharing it space with other small queues.
This may in turn involve returning a block to the free list $F$.
In this way,
each block $X$ in $\cal S$ will either be empty or will have all its
elements belonging to a single heavy queue or as many as $O(B)$ light queues.
In addition, these rules also imply that 
$O(B)$ 
element insertions are required to take a queue from the light state to the 
heavy state
and
$O(B)$ 
element removals are required to take a queue from the heavy state to the light
state.

If a block $X$ in $\cal S$ is being used for light queues, 
then we order the elements in $X$ according to their respective queues.
Each block for a heavy queue $Q$ stores previous and next pointers to the
neighboring blocks in the linked list of blocks for $Q$, with 
the first such block pointing back to the header record for $Q$.
As we show, this organization allows us to maintain our size and label
invariants during execution of enqueue and remove operations.

One additional challenge is that we want to support the remove($x$) operation
to have a constant I/O complexity.
Thus, we cannot afford to search through a list of blocks of a queue looking
for an element $x$ we wish to remove.
So, in addition to the table $\cal S$ and its free list, $F$,
and the headers for each queue in $\cal Q$,
we also maintain an external-memory 
cuckoo hash table, $\cal D$, to be a dictionary that
maps each queue element $x$ to the block in $\cal S$ that stores $x$.
This allows our multiqueue to be \emph{location-aware}, that is, to
support fast searches to locate the block in $\cal S$ that is holding
any element $x$ that belongs to some queue, $Q$.

We will call any block in $\cal S$ containing fewer than $B/4$ items \emph{deficient}. 
In order to ensure that our multiqueue uses total storage proportional to its total size,
we will enforce the following two rules. Together, these rules guarantee that there are $O(N/B)$ deficient blocks in $\cal S$, and hence our 
multiqueue uses $O(N/B)$ blocks of memory.
\begin{enumerate}
\item Each
block $Y$ in $\cal T$ stores a pointer $d$, called the deficient pointer, to 
a block $d(Y)$; the identity of this block is allowed to vary over time. We ensure that at all times, $d(Y)$ is the only (possibly) deficient block associated with $Y$ that stores light queues.
\item Each heavy queue $Q$ also stores in its header block a deficient pointer $d$ to a block $d(Q)$. At all times, $d(Q)$ is the only (possibly) deficient block devoted to
storing values for $Q$.
\end{enumerate} 

\subsubsection{Full Description}
\label{sec:basic}
For the remainder of this subsection, we describe how to implement all multiqueue operations
to obtain constant  \emph{amortized} expected or worst-case runtime. We show how to
deamortize these operations in Section \ref{sec:app}.
%

\paragraph{The Split Action.}
As we perform enqueue operations, a block $X$ may overflow 
its size bound, $B$.
In this case, we need to split $X$ in two, which we do 
by allocating a new block $X'$ from $\cal S$ (using its free list). 
We call $X$ the \emph{source} of the split, and $X'$ the \emph{sink} of the split.
We then proceed depending on whether $X$ contains 
elements from light queues or a single heavy queue.
\begin{enumerate}
\item
$X$ contains elements from light queues.
We greedily copy elements from $X$ into $X'$ until $X'$ has size 
has size at least $B/3$, keeping the elements from the same light queue together.
Note that each light queue has less than $B/3$ elements, so this split will result in at
least a $1/3$--$2/3$ balance. 

Of course, to maintain our invariants, we must change the header records 
from $X$ to $X'$ for any queues that we just moved to $X'$.
We can achieve this by performing a look-up in $\cal T$ for each key corresponding to a queue that was moved
from $X$ to $X'$, and modifying its header record, which requires $O(B)$ I/Os. 
Similarly, in order to support location awareness,
we must also update the dictionary $\cal D$. So, for each element $x$
that is moved to $X'$, we look up $x$ in $\cal D$ and update its pointer to now
point to $X'$.
In total this costs $O(B)$ I/Os. \eat{ but we can charge these,
in an amortized sense, to the $O(B)$
enqueue operations that caused 
$X$ to grow from size at most $2B/3$ to at least $B$.
(These enqueue operations will never be charged in this way again.)}
\item
$X$ contains elements from a single heavy queue $Q$. In this case, we move \emph{no} elements,
and simply take a block $X'$ from the free list and insert it as the head of the list of blocks devoted to $Q$, changing
the header record $H$ in $\cal T$ to point to $X'$. 
We also change the deficient pointer $d$ for $Q$ to point to $X'$, and insert into $X'$ the element that caused the split.
This takes $O(1)$ I/O operations in total.

\end{enumerate}

So, to sum up, when a block holding light queues 
results from a split (source or sink), it has size
at least $B/3$ and at most $2B/3$. When a block holding elements
from a heavy queue $Q$ is split, no items are moved and a block is taken
from the free list and inserted as the new header block of the heavy queue; the new header then contains only one item,
and is identified by the deficient pointer of $Q$. 

\paragraph{The Enqueue Operation.}
Given the above components,
let us describe how we 
perform the enqueue and remove operations.
We begin with the enqueue($x,H$) operation.
We consider how this operation acts, depending on a few cases.

\begin{enumerate}
\item
The queue for $H$ is empty (hence, $H$ is a null pointer 
and its queue is light).
In this case, we examine the block $Y$ from $\cal T$ to which $H$ belongs.
If $d(Y)$ is null, we first take a block $X$ of the free list and set $d(Y)$ to $X$ before continuing. 
We follow the deficient pointer for $Y$ to a block $X'$, and add $x$ to $X'$. If this causes
the size of $X'$ to reach $B$, then we split $X'$ as described above. 
\item
The queue $Q$ for $H$ is not empty. We proceed according to two cases.
\begin{enumerate}
\item If $Q$ is a light queue, we follow $H$ to its block $X$ in $\cal S$ and add $x$ to $X$. If this
brings the size of $Q$ above $B/3$, we perform a \emph{light-to-heavy transition}, taking a block $X'$ off the free list, moving all elements in $Q$
to $X'$, and marking $Q$ as heavy. If this brings the size of $X$ below $B/4$, we process $X$ as in the remove operation below.
\item If $Q$ is a heavy queue, we add $x$ to $X=d(Q)$, the (possibly) deficient block for $Q$.
If this brings the size of $X$ to $B$, then we split $X$, as described
above.
\end{enumerate}
\end{enumerate}
Once the element $x$ is added to a block $X$ in $\cal S$, we then add
$x$ to the dictionary $\cal D$, and have its record point to $X$.

\paragraph{The Remove and isMember Operations.}
In both of these operations, we look up $x$ in $\cal D$ 
to find the block $X$ in $\cal S$ that
contains $x$.
In the isMember($x$) case, we complete the operation
by simply looking for $x$ in $X$.
In the
remove($x$) operation, we do this look up and then remove $x$ from $X$
if we find $x$.
If this causes $Q$ to become empty, then we update its header, $H$,
to be null.
In addition, if this operation causes the size of $X$ to go below $B/4$, then
we need to do some additional work, based on the following cases:
\begin{enumerate}
\item
$Q$ is a heavy queue.
\begin{enumerate}
\item
If $X$ is the only block for $Q$, then $Q$ should now be considered a light
queue; hence, we continue processing it according to the case listed
below where $X$ contains only light queues. We refer to the entirety of this action
as a \emph{heavy-to-light} queue transition.
\item
Otherwise, if $X=d(Q)$, then we are done because $d(Q)$ is allowed to be deficient. 
If $X \neq d(Q)$,
we proceed based on the following two cases:
\begin{enumerate}
\item \emph{$d$-alteration action:} If the size of $d(Q)$ is at least $2B/3$, we simply update $Q$'s deficient pointer, $d$, to point to $X$ instead of $d(Q)$.
\item \emph{Merge action}:
If the size of $d(Q)$ is less than $2B/3$, then we move all 
of the elements of $X$ into $d(Q)$ and we update 
the pointer in $\cal D$ for each moved element. $X$ is returned to the free list.
We call $X$ the source of the merge, and $d(Q)$ the sink.
(Note that in this case, the size of $d(Q)$ becomes at most $11B/12$.)
\end{enumerate}
\end{enumerate}
\item
$X$ contains light queues (hence, no heavy queue elements).  
In this case, we visit the header $H$ for $Q$. 
Let $Y$ denote the block containing $H$. 
\begin{enumerate} 
\item If $X=d(Y)$ we are done,
since $d(Y)$ is allowed to be deficient.
\item
If $X \neq d(Y)$, let $Z$ be the size of $d(Y)$. 
\begin{enumerate}
\item \emph{$d$-alteration action:} If $Z \geq 2B/3$ then we simply update $d$ to point to $X$ instead of $d(Y)$.
\item \emph{Merge action}:
If $Z < 2B/3$, then we merge the elements in $X$ into $d(Y)$, 
which now has size at most $11B/12$, and update pointers in
$\cal D$ and $\cal T$ for the elements that are moved. We return $X$ to the free list. We call 
$X$ the source of the merge and $d(Y)$ the sink.
\end{enumerate}
\end{enumerate}
\end{enumerate}
If a block $X'$ is pointed to by any deficient pointer $d$, it is helpful to think of this 
as ``protection" for $X'$ from being the source of a merge. Once $X'$ is afforded this protection,
it will not lose it until its size is at least $2B/3$ (see the $d$-alteration action).
At a high level, this will allow us to argue that if $X$ and $X'$ are respectively 
the source and sink of a merge action,
neither $X$ nor $X'$ will be the source of a subsequent merge or split operation until it is the target of $\Omega(B)$ enqueue
or remove operations, even though $X'$ may have size very close to the deficiency threshold $B/4$. 

\subsubsection{Amortized I/O complexity}
\label{sec:amort}
\eat{In total this costs $O(B)$ I/Os, but we can charge these,
in an amortized sense, to the $O(B)$
enqueue operations that caused 
$X$ to grow from size at most $2B/3$ to at least $B$.
(These enqueue operations will never be charged in this way again.)}

We now argue formally that enqueue$(x, H)$ and remove($x$) take $O(1)$ amortized time. Notice that the only actions that result in the movement of items between blocks are light-to-heavy and heavy-to-light queue transitions, merge actions, and split actions for blocks containing light queues. Notice for splits involving heavy queues, we perform $O(1)$ I/O operations 
in the worst case,
and do not need to perform an amortized analysis.  

We first argue that light-to-heavy queue transitions as well as heavy-to-light transitions contribute $O(1)$ amortized I/Os to enqueue operations. Indeed, a light-to-heavy queue transition
requires $O(B)$ I/Os in total: we require $O(B)$ I/Os to move $O(B)$ items from $X$ to $X'$ and update pointers in $\cal D$ and $\cal T$, and $O(B)$ additional I/Os to process $X$ as in a remove operation if this causes the size of $X$ to fall below $B/4$. Each such heavy-to-light transition
must be preceded by at least $B/12$ enqueue operations to bring the queue from size at most $B/4$ to size at least $B/3$, so we can charge these $O(B)$ I/Os to these enqueue operations. These enqueue operations will never be charged again. Similarly, a heavy-to-light queue transition requires $O(B)$ I/Os, which we can charge to the (at least) $B/12$ removals that caused $Q$'s size
to fall from $B/3$ to $B/4$; these removals will never be charged again.

Since we have accounted for the I/Os caused by light-to-heavy and heavy-to-light queue transitions, we may ignore all I/Os caused by
these transitions through the remainder of the argument.
We now argue that merge and split actions contribute $O(1)$ amortized I/Os as well, beginning with merge actions.

Suppose $X$ and $X'$ are respectively 
the source and sink of a merge action. We claim that neither $X$ nor $X'$ will be the source of a subsequent merge or split operation until it is the target of $\Omega(B)$ enqueue
or remove operations.
Indeed, notice that after doing a merge action as a part of our
processing of a remove operation,
the sink will
contain at most $11B/12$ elements and will be equal to $d(Y)$  or $d(Q)$, and the source is on the free list.
As $d(Y)$ and $d(Q)$ are protected from merges, it would take at least $B/12$ enqueues or removals in these
blocks before they would be sources of another split or merge operation.

Likewise, after performing a split of a block containing light queues as a part of an enqueue operation,
both source and sink will be of size at least $B/3$ and at most
$2B/3$.
Thus, it would take at least $B/12$ enqueues or removals in these
blocks before they would be sources of another split or merge operation.

Therefore, in an amortized analysis, we can charge the $O(B)$ I/Os
performed in a split or merge action to the previous $O(B)$ operations
that caused one of these blocks to shrink to size $B/4$ or grow to
size $B$. These enqueues and removals will never be charged again.

The arguments of the last two paragraphs are depicted graphically in Figure \ref{flow}. Assuming no light-to-heavy or heavy-to-light transitions
take place (we may assume this because we have separately accounted for the I/O cost of these transitions), we depict a subgraph of
the state diagram for any block $X$. Specifically, we depict all state transitions caused by any action that results in the movement of items from one block to another; for brevity, we omit the effects of any actions that do not result in the movement of items. We refer to any state corresponding to a source
of a merge or split action as a ``source state." It is clear that in the subgraph depicted in Figure \ref{flow}, there is no directed path from any non-source state to any
source state. Given this fact, it is a straightforward exercise to confirm that the only paths from non-source states to source states in the full state diagram (assuming no light-to-heavy or heavy-to-light transitions)
include at least $B/12$ enqueue or remove operations to $X$.
\eat{ in the state diagram
from any non-source state to any source state. As all non-source states in Figure
that for any block $X$ that is a merge sink, merge source, split sink, or split source, at least $B/12$ enqueue or remove operations are necessary}

\begin{figure*}
\begin{center}
\includegraphics[width=4in]{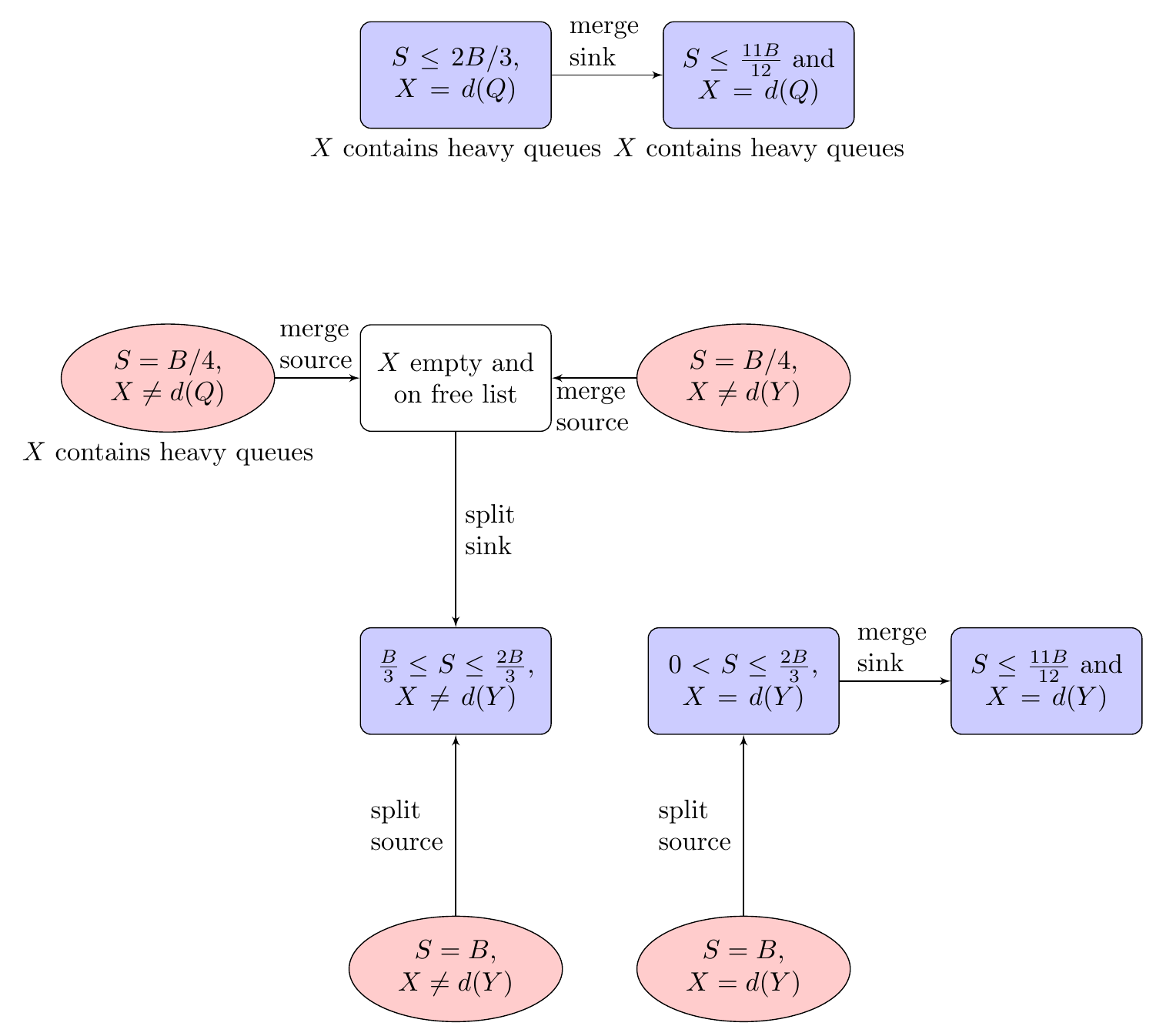}
\caption{A subgraph of the state diagram for any block $X$, depicting all state transitions caused by merge or split actions. $S$ denotes the size of $X$. Ovals denote source states,
while rectangles denote non-source states. Unless otherwise noted, any state depicted is for a block containing light queues.}
\label{flow}
\end{center}
\end{figure*}

\subsection{Deamortizing Multiqueue Operations}
\label{sec:app}
We now explain how to deamortize the multiqueue operations of the previous section. First, notice that the only actions that result in the movement of items between blocks are merge actions, split actions for blocks devoted to heavy queues, light-to-heavy queue transitions, and heavy-to-light queue transitions. We will require the follow property: for any action resulting in the movement of items from source block $X$ to sink block $X'$, neither $X$ nor $X'$ will
be the source of any subsequent action requiring the movement of items until it is the target of at least $B/12$ enqueue or remove operations.

First, we describe some modifications to the light-to-heavy and heavy-to-light queue transitions that are necessary to ensure this property is satisfied. We begin with light-to-heavy transitions. Previously, as soon as a light
queue $Q$ grew to size $B/3$, it was moved from its block $X$ to a block $X'$ devoted exclusively to $Q$; this could cause the size of $X$ to fall close to or below $B/4$, and $X$ could therefore be the source of a merge shortly after (or immediately upon) the light-to-heavy transition. Because this clearly does not satisfy the required property, we will do away with an explicit light-to-heavy transition action, and instead fold this functionality into the split action as follows. 

We leave unmodified the split action for blocks $X$ devoted to heavy queues, as well as for blocks $X$ containing only light queues in which none of the queues have size greater than $B/3$. It is easy to see in both of these cases that the required property is satisfied, as in the first case (split for blocks devoted to heavy queues) no items are moved, and in the second case both the source and sink of the split have size between $B/3$ and $2B/3$. 

However, if the source $X$ of the split move contains a queue $Q$ of size at least $B/3$, we proceed according to the following cases.
\begin{enumerate}
\item $X$ contains a queue $Q$ of size between $B/3$ and $2B/3$. We take a new block $X'$ off the free list and move all items in $Q$
to $X'$, marking $Q$ as heavy and updating the affected pointers in $\cal T$ and $\cal D$. After this split action, both $X$ and $X'$ have size between $B/3$ and $2B/3$, and hence neither will be the source of a split action or merge action until it is the target of at least $B/12$ enqueue or remove operations. 
\item $X$ contains a queue $Q$ of size greater than $2B/3$. Let $\cal I$ denote the items in $X$ that are not in $Q$. We proceed according to the following cases.
	\begin{enumerate} \item If $d(Y)$ has size less than $B/3$, we leave $Q$ in $X$ and mark it as heavy. In addition, we transfer all items in $\cal I$ to $d(Y)$, and update all affected pointers in $\cal T$ and $\cal D$. After the split, $X$ is devoted to $Q$ and has size at least $2B/3$. $X'$ now has size at most $2B/3$, and moreover $X'=d(Y)$ and thus $X'$ is protected from being the source of a merge. It therefore requires at least $B/12$ inserts or removals to $X$ or $X'$ before either can be the source of any action requiring the movement of items between blocks.
	\item  If $d(Y)$ has size greater than $B/3$, we leave $Q$ in $X$ and mark it as heavy. We take a new block $X'$ off the free list and transfer all items in $\cal I$ to $X'$. We update all affected pointers in $\cal T$ and $\cal D$, and modify the deficient pointer $d$ of $Y$ to point to $X'$. The source block $X$ is devoted to $Q$ and has size at least $2B/3$. $X'$ has size $|{\cal I}|\leq B/3$, and moreover $X'=d(Y)$ and thus $X'$ is protected from being the source of a merge. It therefore requires at least $B/12$ inserts or removals to $X$ or $X'$ before either can be the source of any subsequent action requiring the movement of items between blocks.  
	\end{enumerate}
\end{enumerate}

Let us now explain a small modification we must make to the heavy-to-light transitions in order to satisfy the required property. 
Observe that it is possible for a queue $Q$ to undergo a heavy-to-light transition shortly after the final two blocks $X$ and $X'$ devoted to $Q$ are merged into one. For example, it is possible that $X'=d(Q)$ contains one item before the merge and $B/4+1$ items after the merge; if one item is subsequently removed from $X'$, $Q$ will undergo a heavy-to-light transition, and our required property will not be satisfied. This is the only setting in which a deficient pointer fails to ``protect" a block from being merged. To circumvent this difficulty, we modify the heavy-to-light queue transition to only occur when the size of the heavy queue falls below $B/6$ rather than $B/4$. With this in hand, the arguments of Section \ref{sec:amort} suffice to show that any merge action or heavy-to-light transition satisfies our required property. This completes the description of all modifications necessary to ensure the required property is satisfied by all actions.

\eat{
\emph{Merge action modification}: If $X$ is devoted to a heavy queue $Q$ which contains only one other block $X'=d(Q)$, and the size of $X'$ is less than the
size of $X$, then we reverse the role of source and sink. That is, instead of moving all items from $X$ into $X'$ as in Section \ref{sec:algdescription}, we move all items from $X'$ into $X$, transfer $X'$ to the free list, set $d(Q)=X$, and update all relevant pointers in $\cal D$ and $\cal T$. This ensures that, if we move $Z$ elements in the merge action, $X$ will not be the source of a heavy-to-light queue transition until it is the target of at least $Z$ removals. Noticing that $X'$ cannot have been the source or sink of any  In all other cases, the merge action proceeds as in Section \ref{sec:algdescription}, and it follows from the arguments of Section \ref{sec:algdescription} that in these cases, both requires properties are satisfied.}

\eat{
Now that we have argued that the modified split and merge operations satisfy the required two properties, we argue that the merge action satisfies them as well. \textbf{handle merges when a heavy queue is coming close to deficient? is the argument that follows same as in amortized section?} Indeed, the source of a merge action is empty after the operation completes, so it is easy to see that the first property is satisfied. Second, notice that the sink $X'$ of a merge operation is always equal to $d(Y)$. Since $d(Y)$ is never the source of a merge operation, and the deficient pointer $d$ of $Y$ will point to $X'$ until its size exceeds $B/3$, $X'$ will clearly not be the source of a merge operation until it is the object of at least $B/12$ remove operations. And since the sink of a merge operation never has size more than $11B/12$, $X'$ will not be the source of a split operation until it is the object of at least $B/12$ enqueue operations. Hence the second required property is satisfied as well.}

We now explain how to deamortize the operations of Section \ref{sec:multiqueue}, which all required $O(1)$ amortized time. 
The only actions requiring $\omega(1)$ I/O operations in Section \ref{sec:multiqueue} were split actions, merge actions, heavy-to-light transitions, and
light-to-heavy transitions that caused elements from a source block $X$ to be moved to a sink block $X' \neq X$ (the latter have now been replaced with a modified split operation). These actions
required $O(B)$ I/O operations to immediately update all affected pointers in $\cal T$ and $\cal D$.
To deamortize these operations, we immediately move the elements from $X$ to $X'$, but do not immediately update 
any pointers in $\cal T$ and $\cal D$. Instead, we create a pointer $p(X)$ from $X$ to $X'$,
allowing us to spread out
the updates to $\cal D$ and $\cal T$ over many operations as follows. 

We will ensure that any block $X$
need point to at most one block $X'$ at any time; specifically, any time a split action or merge action
causes items to move from block $X$ to block $X'$, we will overwrite the old value of $p(X)$ with the new value. To clarify, when a block $X$ is sent to the free list as a result of 
a merge operation, it must maintain its pointer $p(X)$ throughout its time on the free list; it is only safe to overwrite $p(X)$ when items are 
once again moved
from $X$ to another block $X'$.

We will also ensure that no queue is ever moved more than once before its header in $\cal T$ and the records for all of its
key-value pairs in $\cal D$ are brought up-to-date. Given this fact, 
if we ever follow a pointer from $\cal T$ or $\cal D$ to a block $X$, and the corresponding
item is not in $X$, we need only look in $p(X)$ for the item as well. 

To this end, we associate with each block $X'$ in $\cal S$ a bit-array of length $O(B)$ indicating which items in
$X'$ have up-to-date pointers in $\cal T$ and $\cal D$. Any time items are moved into $X'$ as a result
of a split or merge action, we set the corresponding bits in the bit-array of $X'$ to 0, indicating
these items are not up-to-date. Further, we modify the
enqueue($k,v)$ and remove$(k, v)$ operations such that if $(k, v)$ is stored in block $X$, then we 
update the pointers in $\cal T$ and $\cal D$
of up to 12 items in $X$ and 12 items from
from $p(X)$ that are not up-to-date. We then mark these items as up-to-date. This requires only $O(1)$ I/O operations for each enqueue$(k,v)$ or remove($k,v)$
function call.

We finally argue that each time items from a block
$X$ to be moved to a block $X'$,
it is safe to overwrite $p(X)$ with a pointer to $X'$. Indeed, we carefully argued above that all actions resulting in a movement of items from source block $X$ to sink block $X'$ satisfy our required property. It is easy to see that this implies neither $X$ nor $X'$ will be the source of another sink or merge until it is the target of at least $B/12$ enqueue or remove operations. By that point, all items in $X$ (or $X'$) and $p(X)$ (or $p(X')$) will be up-to-date, so it safe to overwrite $p(X)$ (or $p(X')$).


We obtain the following theorem.

\begin{theorem} \label{thm:multiqueue}
We can implement a location-aware multiqueue so that 
the remove($x$) and isMember($x$)
operations each use $O(1)$ I/Os,
and the enqueue($x,H$) operation uses $O(1+t(N))$ expected I/Os,
where $t(N)$ is the expected number of I/Os needed to perform an
insertion in an external-memory cuckoo table of size $N$.
\end{theorem}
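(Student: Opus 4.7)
The plan is to prove the theorem by analyzing each operation in turn, invoking the infrastructure built up in Sections \ref{sec:basic} and \ref{sec:app} and treating the underlying cuckoo hash tables $\cal D$ and $\cal T$ as black boxes whose lookups and deletions cost $O(1)$ I/Os in the worst case and whose insertions cost $t(N)$ expected I/Os. The core claim to verify is that the deamortization machinery really does reduce each operation to a constant number of I/Os beyond at most one cuckoo insertion for enqueue.

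For \emph{isMember}$(x)$, I would first look up $x$ in $\cal D$ (one cuckoo query, $O(1)$ I/Os), follow its pointer to a block $X\in\cal S$, and, because pointer updates are lazy, also consult $p(X)$ if the item is not found in $X$. Since any item whose $\cal D$-pointer has not yet been brought up-to-date must still reside in either $X$ or $p(X)$, this inspects at most two blocks. For \emph{remove}$(x)$ I would follow the same path, remove $x$ from the identified block, and delete its record from $\cal D$; the subsequent bookkeeping (at most one $d$-alteration, merge, or heavy-to-light transition, plus the 12+12 lazy pointer updates prescribed in Section \ref{sec:app}) is bounded by a constant number of I/Os in the worst case, so the whole operation is $O(1)$ I/Os.

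For \emph{enqueue}$(x,H)$, I would follow $H$ to the appropriate block (possibly via $d(Y)$ or $d(Q)$), insert $x$, and trigger at most one split or light-to-heavy action; all of these now move items only across blocks that are ``protected'' from another source-role for at least $B/12$ subsequent operations, so the lazy update scheme ensures only $O(1)$ I/Os are spent on pointer maintenance. The only non-constant-cost subroutine is the insertion of $x$ into $\cal D$, which contributes $t(N)$ expected I/Os, giving the claimed $O(1+t(N))$ expected bound.

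The main obstacle, and where I would spend the most care, is justifying the correctness of the deamortization: specifically, that when we overwrite $p(X)$ for a block $X$ that is becoming a source again, all of the items previously moved out of (or into) $X$ via the old $p(X)$ have already been brought up-to-date in $\cal T$ and $\cal D$. This hinges on combining two ingredients established earlier: the invariant that neither the source nor the sink of any item-moving action can play that role again for at least $B/12$ enqueue/remove operations targeting it, together with the rule that each such operation refreshes 12 stale pointers in the block it visits and 12 in its $p$-successor. A short counting argument then shows that by the time the relevant block is next involved in a split, merge, or transition, every previously deferred pointer has been flushed, so it is safe to reuse $p(X)$. With that invariant in hand, all other bounds follow directly from the per-action I/O counts and the cuckoo hash table costs.
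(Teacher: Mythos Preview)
Your proposal is correct and follows essentially the same approach as the paper. The paper does not give a separate formal proof of this theorem; it is stated as a summary of the construction and analysis carried out in Sections~\ref{sec:basic}--\ref{sec:app}, and your write-up accurately recapitulates those ingredients: the $O(1)$ cost of cuckoo lookups/deletions for \emph{isMember} and \emph{remove} (with the extra $p(X)$ probe to handle stale pointers), the single cuckoo insertion into $\cal D$ accounting for the $t(N)$ term in \emph{enqueue}, and the deamortization correctness argument pairing the $B/12$-operation ``protection'' invariant with the $12+12$ lazy pointer refreshes. One minor imprecision: a single \emph{remove} can trigger a heavy-to-light relabeling \emph{followed by} a merge (not just one or the other), but since each step is still $O(1)$ I/Os your conclusion is unaffected.
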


It should be clear from our description that, except for trivial cases
(such as having only a constant number of elements), the space
requirements of our multiqueue implementation is within a constant
factor of the optimal.  We have not attempted to optimize this factor, though
there is some flexibility in the multiqueue operations (such as when to do a split)
that would allow some optimization.
We study these tradeoffs in Section \ref{sec:experiments}.


\section{Combining Cuckoo Hashing and Location-Aware Multiqueues}
\label{sec:combine}
In this section, we describe how to construct an efficient
external-memory multimap implementation by combining the 
data structures described above.
The result is 
a cuckoo hash table in external memory so as to support constant
expected-I/O insertions and optimal findAll and removeAll
operations.

\begin{figure*}[!thb]
\begin{center}
\includegraphics[width=4in, trim=0.75in 2.5in 0.75in 0.5in, clip]{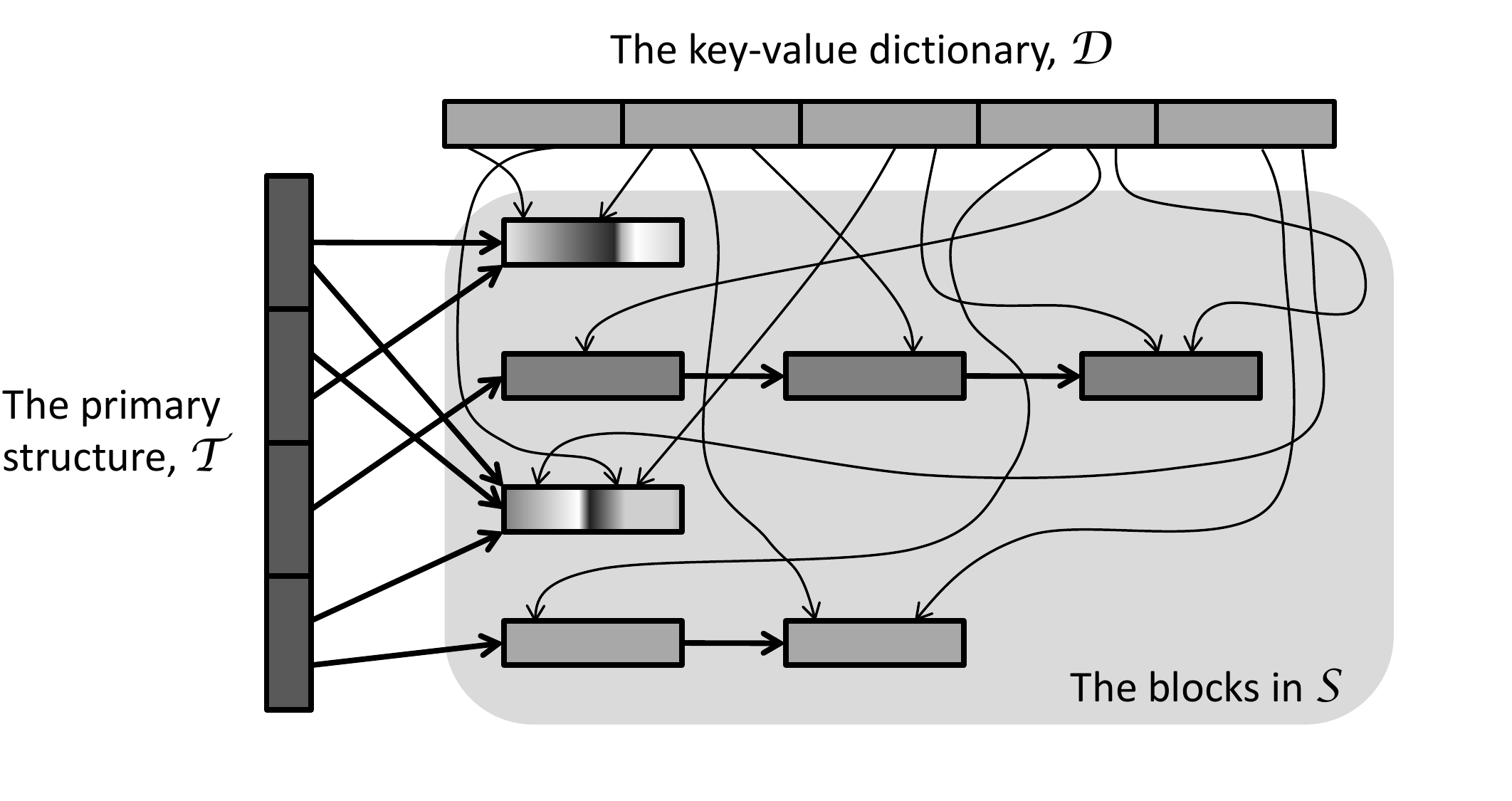}
\caption{The external-memory multimap, online version.}
\label{fig:buckets}
\end{center}
\end{figure*}

We store an external-memory cuckoo hash table, as
described above, as our primary structure, $\cal T$,
with each record pointing to a block
in a multiqueue, $\cal S$, having an auxiliary dictionary, $D$,
implemented as yet another external-memory cuckoo hash table.
We then perform each
of the operations of the multimap ADT as follows.

\begin{itemize}
\item
insert$(k,v)$: To insert the key-value pair, $(k,v)$,
we first perform a look up for $k$ in $\cal T$.
If there is already a record for $k$ in $\cal T$, we increment its count. We
 then follow its pointer to the appropriate block $X$ in $\cal S$, (in the deamortized implementation,
the queue for $k$ may reside in $p(X)$ rather than $X$),
and add the pair $(k,v)$ to $\cal S$, as in the enqueue multiqueue method.
Otherwise we insert $k$ into $\cal T$ with a null header record and count 1 and then add the pair $(k, v)$ 
to $\cal S$ as in the enqueue multiqueue method.
\item
isMember$(k,v)$: This is identical to the isMember$(k,v)$ multiqueue operation. \eat{We do a look up in $D$ to determine whether the key-value pair
$(k,v)$ exists as a record in $\cal S$, and return 
a Boolean value encoding the success or failure of this
look up.} 
\item
remove$(k,v)$: To remove the key-value pair, $(k,v)$, from $C$,
we perform a look up for $(k,v)$ in $\cal D$. If there is no record for $(k,v)$ in $\cal D$, we
return an error condition. Otherwise, we follow this pointer
to the appropriate block $X$ of $\cal S$ holding the pair $(k,v)$ (in the deamortized implementation,
if $(k,v)$ is not in $X$, we may have to look in $p(X)$ as well).
We remove the pair $(k,v)$ from $\cal S$ and $\cal D$ as in the remove multiqueue method, and decrement its count.
\item
findAll$(k)$: To return the set of all key-value pairs
in $C$ having key equal to $k$, we perform a look up for $k$ in $\cal T$,
and follow its pointer to the appropriate block of $\cal S$ (in the deamortized implementation,
the queue for $k$ may reside in $p(X)$ rather than $X$). If this
is a light queue, then we just return the items with key equal to
$k$.
Otherwise, we return the entire block and all the other blocks of
this queue as well.
\item
removeAll$(k)$: We give here a constant amortized time implementation, and explain
in Section \ref{sec:removeall} how to deamortize this operation. To remove from $C$ all key-value pairs
having key equal to $k$, we perform a look up for $k$ in $\cal T$,
and follow its pointer to the appropriate block $X$ of $\cal S$ (in the deamortized implementation,
the queue for $k$ may reside in $p(X)$ rather than $X$). If this
is a light queue, then we remove from $X$ all items with key equal to
$k$ and remove all affected pointers from $\cal D$; if this causes $X$ to become deficient, we 
perform a merge action or $d$-alteration action as in the remove multiqueue method. If
this is a heavy queue, we walk through all blocks of this queue and remove all items from these blocks and
return each block to the free list. We also remove all affected pointers from $\cal D$. 
Finally, we remove the header record for $k$ from $\cal T$, which implicitly sets the count of $k$ to zero as well. We charge, in an amortized sense,
the work for all the I/Os to the insertions that added these
key-value pairs to $C$ in the first place.
\item count($k$): Return $n_k$, which we track explicitly for all keys $k$ in $\cal T$.
\end{itemize}

\subsection{Deamortizing removeAll($k$)}
\label{sec:removeall}
The removeAll($k$) operation of Section \ref{sec:combine} required $O(1)$ amortized I/O operations in the worst case without altering the capacity of our structure.
We now describe a deamortized implementation that also requires $O(1)$ I/O operations and does not alter the capacity. We perform a look up for $k$ in $\cal T$. If no record 
is found, we are done. Otherwise we follow its pointer to the header of its queue $Q$. We remove all items in $Q$ from $\cal S$, and set $k$'s pointer in $\cal T$ to null. This completes
the operation; notice we do \emph{not}
update any records in $\cal D$ at this time. Instead, we explain the modifications necessary to handle the existence of ``spurious" pointers in $\cal D$ (i.e. pointers for $(k, v)$ pairs which were deleted in a removeAll operation) with an $O(1)$ increase in the I/O cost of the insert$(k,v)$, remove($k, v$), isMember($k, v$), and findAll($k)$ operations.

First, we describe a function isSpurious($k,v$) that requires $O(1)$ I/O operations and determines whether an entry $(k,v)$ in $\cal D$ is spurious. 
isSpurious($k,v$) first peforms a look up in $\cal D$ for $(k, v)$. If no record for $(k, v)$ exists, we return false. Otherwise, we follow the pointer for $(k, v)$ to a block $X$ in $\cal S$ and search $X$ for $(k, v)$. If a record is found we return false.
Otherwise, we follow the pointer $p(X)$ (described in Section \ref{sec:app}) to a block $X'$ and search for $(k, v)$ in $X'$.  If it is found,
we return false, otherwise we return true. 

We now describe how to modify the insertion method of our external-memory cuckoo hash table $\cal D$ so that the presence of spurious pointers does not decrease 
the table's capacity. First, when inserting a key-value pair $(k,v)$ into $D$, we begin by doing a look up in $\cal D$ for $(k, v)$. If a record for $(k, v)$ exists, we call isSpurious$(k,v)$. If this function returns false, we return an error condition. Otherwise, we remove the record for $(k,v)$ from $\cal D$ before proceeding.   This ensures that at all times there is only one entry for each pair $(k,v)$ in $\cal D$.

Second, we modify the BFS-based insertion procedure of Theorem \ref{thm:cuckoo} as follows. For each bucket visited by the BFS, we 
call isSpurious$(k,v)$ for all pairs $(k,v)$ residing in the bucket. If this function returns true for any pair $(k,v)$, we delete $(k,v)$ from $\cal D$ and
insert the new pair in its place. This ensures that no spurious entry in $\cal D$ ever prevents another entry from
being inserted, i.e., the spurious entries will have no effect on the capacity of the table. Since the buckets in the cuckoo hashing algorithm of Theorem \ref{thm:cuckoo} have constant size, calling isSpurious$(k,v)$ on a
bucket requires just $O(1)$ I/O operations.

With this in hand, we finally describe how to modify the insert$(k,v)$, remove($k, v$), isMember($k, v$), and findAll($k)$ operations
to handle the presence of spurious entries in $\cal D$ with only an $O(1)$ increase in the I/O complexity of each operation.

\begin{enumerate}
\item
insert$(k,v)$: Works unmodified.
\item isMember($k,v$): We call the previous implementation of isMember($k,v)$ as well as the function checkSpurious($k,v)$. We return true if and only if the former
returns true and the latter returns false.
\item remove$(k,v)$:  We 
perform a look up for $(k,v)$ in $\cal D$. If none is found, we return an error condition. Otherwise, we call the function isSpurious($k,v)$. If this returns true,
we return an error condition. Otherwise, we 
call the old implementation of remove$(k,v)$. 
\item findAll$(k)$: Works unmodified.
\end{enumerate}

We finally obtain the following theorem.

\begin{theorem}
One can implement the multimap ADT in external memory using $O(N/B)$ blocks of memory with 
I/O performance as shown in Table~\ref{tbl:bounds}.
\end{theorem}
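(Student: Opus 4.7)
The plan is to assemble the bounds for each ADT method from the ingredients already established: Theorem~\ref{thm:cuckoo} gives an external-memory cuckoo hash table with $O(1)$ worst-case lookup/removal and $\bar O(1)$ expected insertion, and Theorem~\ref{thm:multiqueue} gives a location-aware multiqueue with $O(1)$ worst-case remove and isMember and $O(1+t(N))$ expected enqueue, where $t(N)$ denotes expected cuckoo insertion time. Plugging $t(N)=O(1)$ from Theorem~\ref{thm:cuckoo} into Theorem~\ref{thm:multiqueue} immediately yields $\bar O(1)$ enqueue. The combined structure from Section~\ref{sec:combine} consists of the primary cuckoo hash table $\cal T$ (keys with header pointers and counters), the multiqueue $\cal S$, and the secondary cuckoo hash table $\cal D$ used for location awareness, and each ADT operation reduces to a constant number of lookups/updates in these three structures.

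First I would walk through the ADT operations: insert$(k,v)$ costs $O(1)$ expected I/Os in $\cal T$ (to find or create the header for $k$ and bump its counter $n_k$) plus $O(1)$ expected I/Os for the multiqueue enqueue that also updates $\cal D$; isMember$(k,v)$ and remove$(k,v)$ use a single worst-case $O(1)$ cuckoo lookup in $\cal D$ followed by the worst-case $O(1)$ multiqueue operation; count$(k)$ is a single $O(1)$ cuckoo lookup in $\cal T$ returning the stored $n_k$; findAll$(k)$ reads the header in $\cal T$ and then either one shared block (light queue) or the $O(\lceil n_k/B\rceil)$ blocks of the linked list of a heavy queue, giving $O(1+n_k/B)$. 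For removeAll$(k)$, I would invoke the deamortized implementation of Section~\ref{sec:removeall}: wipe the header in $\cal T$ and traverse and free the queue's blocks in $\cal S$ without touching $\cal D$, leaving ``spurious'' $\cal D$ entries to be scavenged lazily. Here I have to verify that the modifications of insert, remove, and isMember — each of which now invokes isSpurious and, in the BFS insertion of Theorem~\ref{thm:cuckoo}, evicts spurious records in any inspected bucket — cost only an extra $O(1)$ I/Os per operation and restore $\cal D$'s effective capacity.

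For the space bound, I would show that $\cal T$, $\cal S$, and $\cal D$ each occupy $O(N/B)$ blocks. The hash tables $\cal T$ and $\cal D$ have load within a $1+\epsilon$ factor of full by Theorem~\ref{thm:cuckoo}, contributing $O(N/B)$ blocks each. For $\cal S$, I would rely on the multiqueue invariant that the only blocks allowed to be deficient (fewer than $B/4$ items) are those pointed to by some $d(Y)$ in $\cal T$ (at most one per block of $\cal T$, hence $O(N/B)$) or by some $d(Q)$ of a heavy queue (at most one per heavy queue, and at most $O(N/B)$ heavy queues since each holds $\geq B/6$ items after the Section~\ref{sec:app} modification). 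Every non-deficient block contains at least $B/4$ items, so those contribute $O(N/B)$ blocks, and the total is $O(N/B)$.

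The main obstacle will be the interaction between the lazy cleanup of spurious pointers in $\cal D$ and the guarantees of Theorem~\ref{thm:cuckoo}. I have to check that eagerly evicting spurious entries whenever the BFS inspects their bucket, together with on-the-fly isSpurious checks during insert/remove, is enough to keep the effective load of $\cal D$ bounded by a $(1+\epsilon)$ factor so that the expected insertion time from Theorem~\ref{thm:cuckoo} still applies, and that the added isSpurious cost is only $O(1)$ I/Os even after chasing the forwarding pointer $p(X)$ that the deamortization of Section~\ref{sec:app} introduces. Once these points are verified, collecting the per-operation bounds and the block count gives exactly the I/O performance in Table~\ref{tbl:bounds} using $O(N/B)$ blocks of memory.
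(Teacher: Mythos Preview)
Your proposal is correct and mirrors the paper exactly: the theorem is stated in the paper without a separate proof, as the culmination of Sections~\ref{sec:map}--\ref{sec:removeall}, and your plan of assembling the per-operation bounds from Theorems~\ref{thm:cuckoo} and~\ref{thm:multiqueue}, the combined structure of Section~\ref{sec:combine}, and the spurious-pointer handling of Section~\ref{sec:removeall} is precisely that assembly. One wording slip to fix: for removeAll$(k)$ you say ``traverse and free the queue's blocks,'' but traversal would cost $O(n_k/B)$, not the $O(1)$ claimed in Table~\ref{tbl:bounds}; the intended operation (implicit in the paper's terse ``remove all items in $Q$ from $\cal S$'') is to splice the heavy queue's doubly-linked block list onto the free list in a constant number of pointer updates, and for a light queue to clear its items from the single shared block.
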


\eat{
\section{Parallel Implementations}
Let us now consider how the above multimap implementation can be
implemented in parallel in the BSP model.
In particular, let us consider the overhead and 
steps needed to process each multimap operation.

In implementing the above construction in parallel, we consider that
each block of size $B$ is now be stored on a different processor, so
that an I/O operation in the above algorithms should now
correspond to a message between two processors.
That is, let us assume that we have $P=N/B$ processors, each holding
a memory of size $B$.
(If there are actually fewer than $N/B$ processors, then we can
evenly distribute the $N/B$ processors as virtual processors on the real
processors.)
An additional important
difference with the above multimap implementation is that in the BSP
model we may use multiple simultaneous messages to perform 
inter-processor communication rounds.
So, let us consider how each component of our construction
could be implemented in parallel. 

\paragraph{A Parallel Cuckoo Hash Table.}
In implementing a block-based cuckoo hash table in parallel, we now
assume that each block is stored on a separate processor.
So accessing a
given block now involves sending a message to the appropriate processor
for that block.
Likewise, to remove an item, we simply send a message to the
processor for the block holding that item.
To perform an insertion, we simulate the block-level search for a
non-full block as parallel rounds of communication.
Thus, in an expected $O(1)$ number of rounds we can expect to find an
available block, and in one more round transfer the items between
blocks so as to implement the exchanges called for an external-memory
cuckoo table insertion.

\paragraph{A Parallel Multiqueue.}
In implementing a parallel version of the multiqueue, we again note that
each block is now stored on a separate processor.
So to split a block or merge two blocks, we need to send messages
of size $O(B)$ each
between the processors holding these blocks,
which takes a single communication round.
In addition, we must also update the memory of the processors holding
blocks of the dictionary, $\cal D$,
which can also be done in a single communication round, using $O(B)$
messages of total size $O(B)$.
Finally, in the enqueue operation, we have an additional step of
inserting the new item in $\cal D$, which takes an expected $O(1)$
rounds of communication.
Thus, we can implement each operation of a parallel multiqueue using
a constant expected number of parallel steps.

\paragraph{A Parallel Multimap.}
In implementing a parallel version of the multimap, we use the
parallel implementations of a cuckoo hash table and multiqueue
outlined above.
The algorithms for the methods of the multimap ADT are as described
above, at a high level.
Such an implementation implies that the findAll($k$) and
removeAll($k$) methods run in $O(n_k/B)$ time,
where $n_k$ is the number of items involved.
If we anticipate that such a time bound is too slow, then we can
store each heavy queue in a parallel B-tree, instead of a
doubly-linked list, which will allow the findAll($k$) and
removeAll($k$) methods to run in $O(\log_B n_k)$ time in this
parallel implementation.

\begin{theorem}
One can implement the multimap ADT in the BSP model using $P=N/B$
processors, each holding a memory of size $O(B)$, so that insertions
are done in expected constant time,
isMember and remove are performed in $O(1)$ time, and findAll and
removeAll run in $O(\log_B n_k)$ time, where $n_k$ is the number of
items involved.
\end{theorem}
}

 \section{Experimental Results}
\label{sec:experiments}
We performed extensive simulations of our algorithm in order to explore how various settings of the design parameters affect I/O complexity and space usage, for both our basic algorithm (Section \ref{sec:multiqueue}) and our deamortized algorithm (Section \ref{sec:app}).

We simulated a cache of size $M=512$ KB with blocks of size 4 KB. 
Our simulated cache used the least-recently used page replacement rule. 
When reporting the number of I/Os, we count only transfers from disk to cache; 
each such transfer is preceded by a transfer from cache to disk of the least recently used cache page, 
and we do not count this transfer in our reported values. We drew keys from a universe of size $2^{20} \approx 1$ million, 
using $4$ bytes to store each key, and 8 bytes to store each value. We did not explicitly store queues as 
doubly-linked lists, but instead laid them out as arrays within their blocks, with a marker representing 
the end of one queue and the beginning of another; this allowed us to avoid storing expensive pointers for 
these lists. We used 4 bytes to represent all pointer values in $\cal D$ and $\cal T$. We did not 
charge for storing the counts associated with each key because we do not need to store these counts explicitly
except to achieve $O(1)$ I/O operations for the count$(k)$ operation (and moreover we can achieve 
this by only storing explicit counts for heavy queues, as the count of a light queue $Q$ can be 
obtained in O(1) I/Os by finding $Q$'s unique block in $\cal S$ via a lookup in $\cal T$ and then counting how many items $Q$ contains). 

All results presented use random-walk cuckoo hashing with two hash
functions and buckets capable of storing 4 KBs of data; we found that
using the partitioning technique of Theorem \ref{thm:cuckoo} to
implement cuckoo hashing required slightly more space (and I/O
complexity was comparable) because the hash tables had slightly
smaller capacity. 
For our hash tables $\cal D$ and $\cal T$, we allotted a space overhead of 
$\epsilon=0.07$; we found this was even more overhead than strictly necessary.
We also ran a full set of experiments using three
hash functions to implement cuckoo hashing, but found that two hash
functions was sufficient due to the large bucket size; we found using two hash
functions instead of three saved about 1 I/O per insert and remove
operation.
To capture realistic frequency distributions, which are often skewed, we generated all keys for insertions from a Zipfian distribution; in a Zipfian distribution with parameter $\alpha$, the frequency of the $k$'th most frequent item is proportional to $k^{-\alpha}$. The larger $\alpha$, the more skewed the frequency distribution.

Our goal was to identify the steady-state behavior of our data structure. In all experiments, we performed a sequence of 1 million insertions, followed by a sequence of 8 million alternating insert($k,v)$ and remove$(k,v)$ operations.  For each remove operation, the pair for removal was selected uniformly at random from the table. 

\subsection{Basic Implementation}
\label{exp:am}

Space usage results from the basic algorithm are shown in Figures \ref{fig-hash2a} and \ref{fig-hash2b}.  The vertical line represents the point at which we completed $2^{20} \approx 1$ million insertions and began alternating insertions and deletions. $\alpha$ denotes the Zipfian parameter, $B/\beta$ denotes the light-to-heavy queue transition threshold, and $B/\gamma$ is the deficiency parameter (i.e. blocks of size less than $B/\gamma $ are declared deficient). Notice we experiment with more aggressive settings of $\beta$ and $\gamma$ for $\alpha=1.1$.  

Across all parameter settings, we achieved steady-state loads of between $.33$ and $.39$, where we defined the load to be $S/(12 \times 2^{20})$, where $S$ is the number of bytes used by pages not on the free list in our algorithm, and $12 \times 2^{20}$ is the minimum number of bytes required to explicitly store all $2^{20}$ key-value pairs in the structure. Notice with $4$ KB blocks, $12 \times 2^{20}$ bytes corresponds to just over 3,000 blocks of memory. 


\newlength{\figwidth}
\setlength{\figwidth}{0.46\textwidth}
\begin{figure*}[!h]
\centering
\subfigure[Space usage for $\alpha=.99$.]
{
\includegraphics[width=\figwidth]{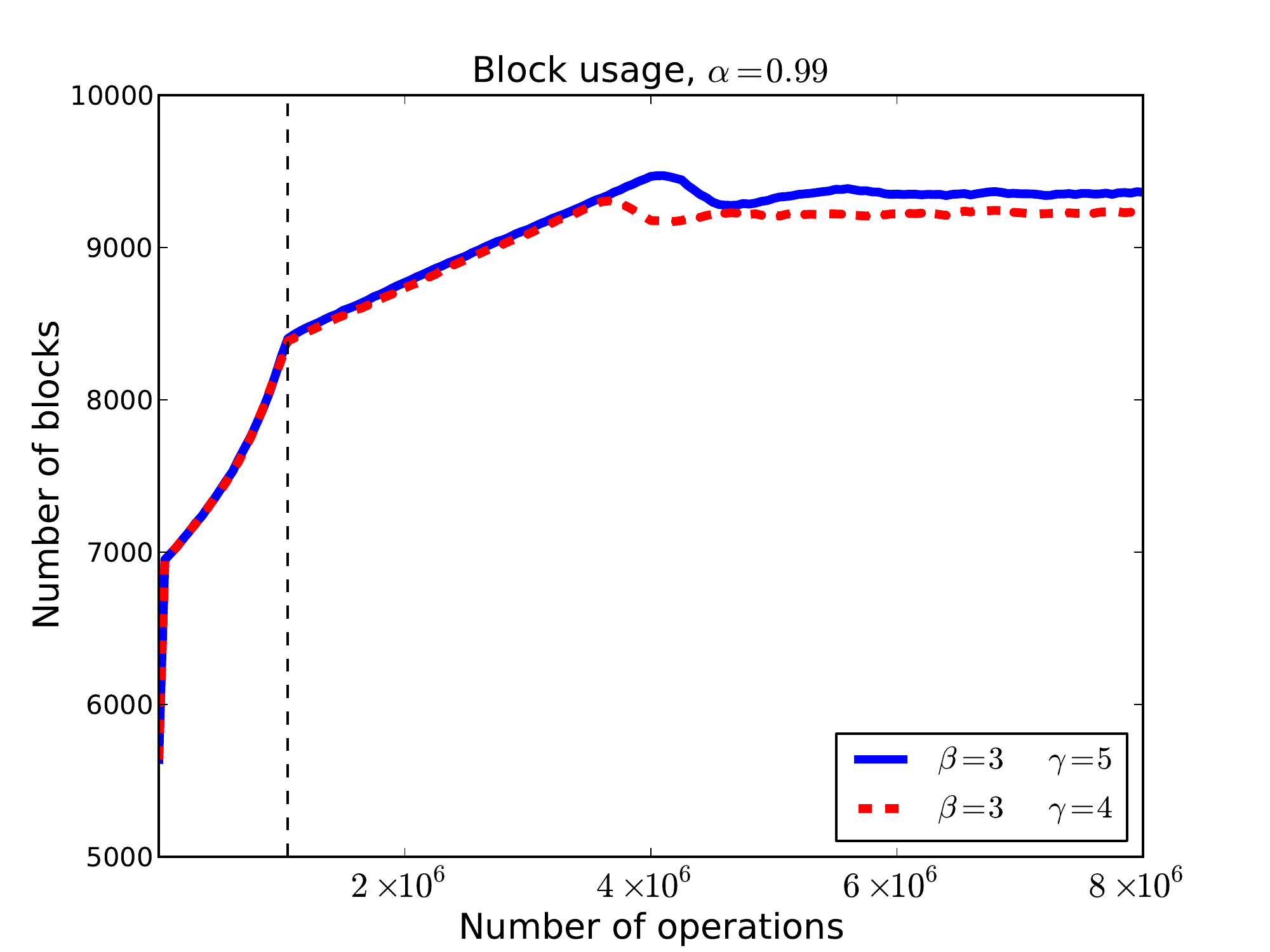}
\label{fig-hash2a}
}
\subfigure[center][Space usage for $\alpha=1.1$.]
{
\includegraphics[width=\figwidth]{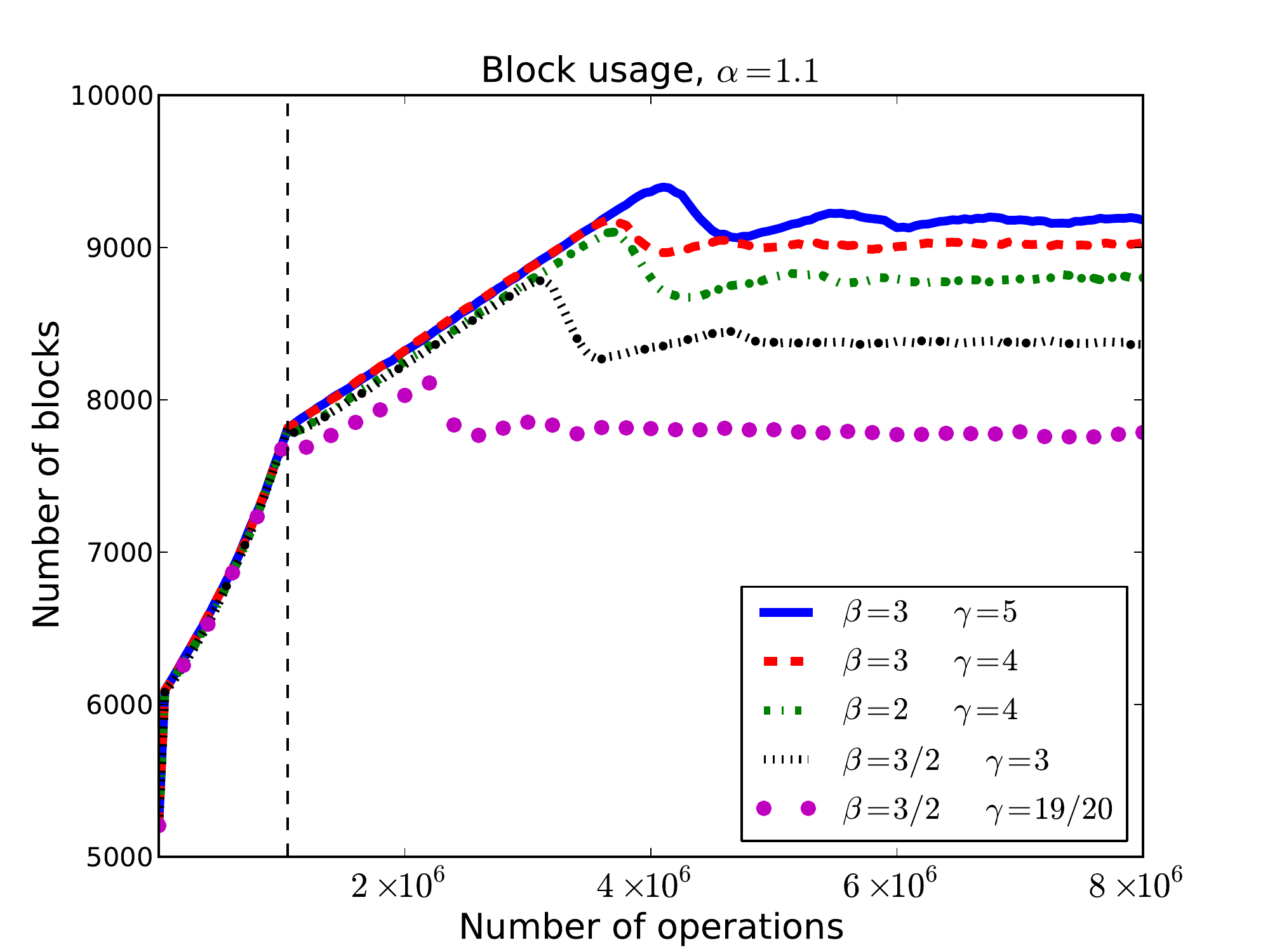}
\label{fig-hash2b}
}
\caption{Results from simulations of an implementation of our basic (amortized) multimap algorithms.}
\end{figure*}

I/O statistics from representative settings of parameters are in Table~\ref{tab}. 
A smaller $\gamma$ results in improved space usage as a smaller $\gamma$ implies that we perform merge actions more aggressively. Similarly, a smaller $\beta$ implies we are more reluctant to tie up entire blocks devoted to a single heavy queue, and thus yields improved space usage. 

In the basic algorithm, the average cost over all insert and remove operations is extremely low: about $3.5$ I/Os per operation. However, as depicted in Table~\ref{tab} the cost distribution is bimodal -- the vast majority (over 99.9\%, except for $\gamma$ very close to 2) of operations require about 4 I/Os, but a small fraction of operations require several hundred. The maximum number of I/Os ranges between $400$ and $650$.

These high-cost operations are due to split and merge actions. 
The deamortized implementation displays substantially different behavior, with no operation requiring more than a few dozen I/Os (see Section \ref{exp:deam}).  Notice we tested parameter values for which the theoretical bounds on I/O complexity do not hold; for example, with $\gamma=19/10$, a merge may immediately follow a split.

\vspace{-0.25cm}
\begin{table}[!h]
\small
\centering
\subtable[Overall]{
\begin{tabular}{|c|c|c|c|c|c|} 
\hline
$\alpha$ & $\beta$ & $\gamma$ & Mean & Std Dev & Max \\
 & & & I/Os & I/Os & I/Os \\
\hline
0.99 & 3 & 5 & 3.53 & 4.24 & 639 \\ 
0.99 & 3 & 4 & 3.52 & 4.59 & 625 \\ 
\hline
1.10 & 3 & 5 & 3.17 & 4.29 & 398 \\ 
1.10 & 3 & 4 & 3.23 & 4.90 & 401 \\ 
1.10 & 2 & 4 & 3.20 & 5.27 & 403 \\ 
1.10 & 3/2 & 3 & 3.25 & 6.73 & 534 \\ 
1.10 & 3/2 & 19/10 & 3.68 & 14.81 & 536 \\ 
\hline
\end{tabular}
}

\subtable[$\le$ 15 I/Os]{
\begin{tabular}{|c|c|c|c|c|c|}
\hline
$\alpha$ & $\beta$ & $\gamma$ & \% of & Mean & Std Dev \\
 &  &  & Ops &  I/Os &  I/Os \\
\hline
0.99 & 3 & 5 & 99.96 & 3.46 & 0.97 \\ 
0.99 & 3 & 4 & 99.96 & 3.44 & 0.96 \\ 
\hline
1.10 & 3 & 5 & 99.95 & 3.08 & 1.13 \\ 
1.10 & 3 & 4 & 99.94 & 3.12 & 1.14 \\ 
1.10 & 2 & 4 & 99.95 & 3.09 & 1.14 \\ 
1.10 & 3/2 & 3 & 99.95 & 3.10 & 1.14 \\ 
1.10 & 3/2 & 19/10 & 99.83 & 3.09 & 1.13 \\ 
\hline
\end{tabular}
}
\subtable[ $>$ 15 I/Os]{
\begin{tabular}{|c|c|c|c|c|c|}
\hline
$\alpha$ & $\beta$ & $\gamma$ & \% of & Mean & Std Dev \\
 &  &  & Ops &  I/Os &  I/Os \\
\hline
0.99 & 3 & 5 & 0.04 & 203.59 & 84.64 \\ 
0.99 & 3 & 4 & 0.04 & 199.69 & 82.29 \\ 
\hline
1.10 & 3 & 5 & 0.05 & 181.66 & 67.12 \\ 
1.10 & 3 & 4 & 0.06 & 183.51 & 68.81 \\ 
1.10 & 2 & 4 & 0.05 & 224.45 & 74.44 \\ 
1.10 & 3/2 & 3 & 0.05 & 279.58 & 72.17 \\ 
1.10 & 3/2 & 19/10 & 0.17 & 354.52 & 79.05 \\ 
\hline
\end{tabular}
}
\caption{\label{tab}I/O statistics for our basic (amortized) implementation (a) overall, (b) for operations requiring up to 15 I/Os, and (c) for operations requiring more than 15 I/Os.}
\end{table}

\subsection{Deamortized Implementation}
\label{exp:deam}

Figures \ref{fig-d-hash2a} and \ref{fig-d-hash2b} presents space usage results for the 
deamortized implementation, following the same protocol as the amortized 
experiments (Section \ref{exp:am}).  We achieved loads of about $.33$ to $.35$ 
for basic parameter values ($\gamma=4$ and $\gamma=5$). We also experimented with very
high settings of $\gamma$, where we trade-off increased space usage for improved I/O complexity.

\setlength{\figwidth}{0.46\textwidth}
\begin{figure*}[!h]
\centering
\subfigure[Space usage for $\alpha=.99$.]
{
\includegraphics[width=\figwidth]{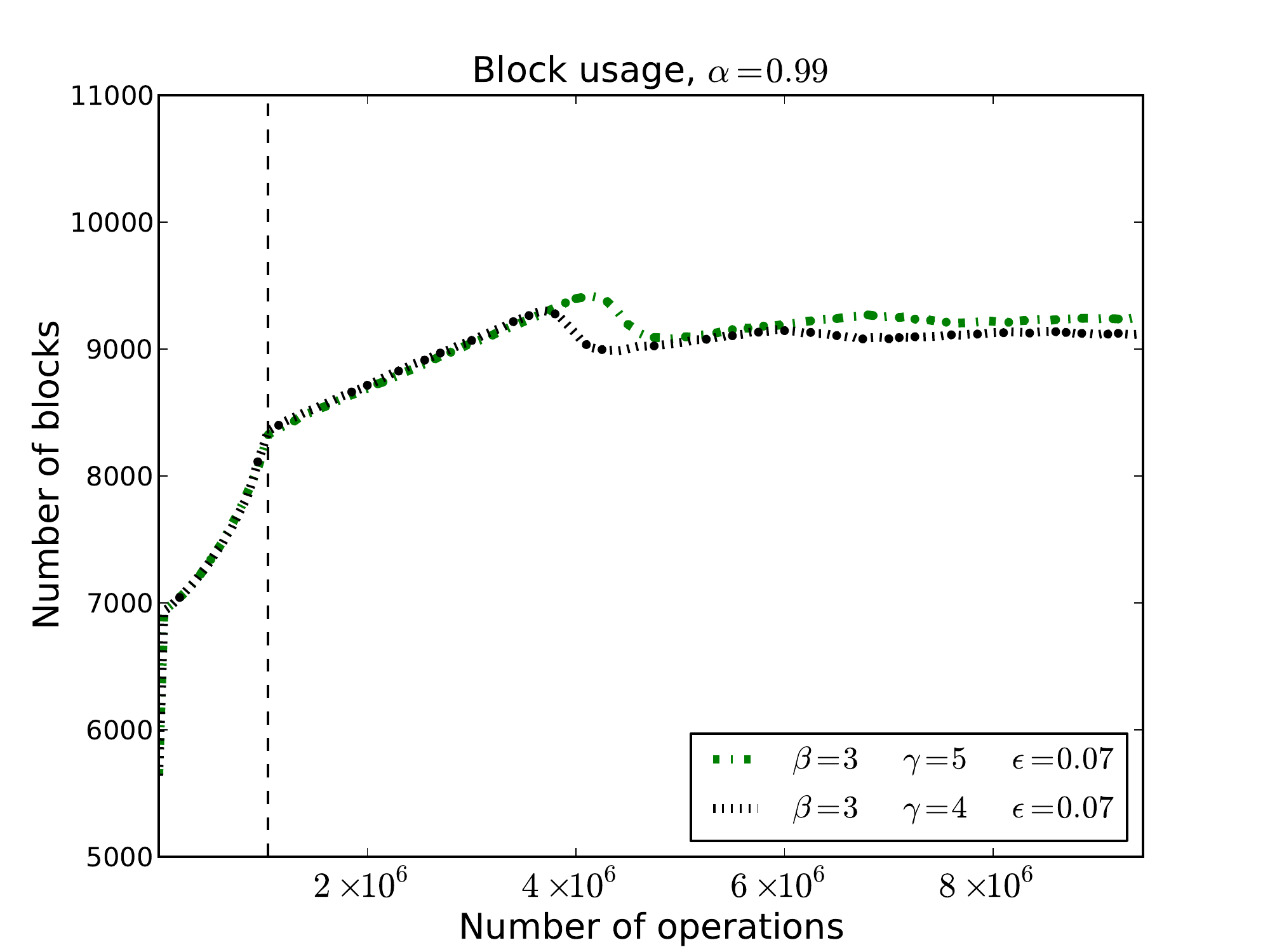}
\label{fig-d-hash2a}
}
\subfigure[center][Space usage for $\alpha=1.1$.]
{
\includegraphics[width=\figwidth]{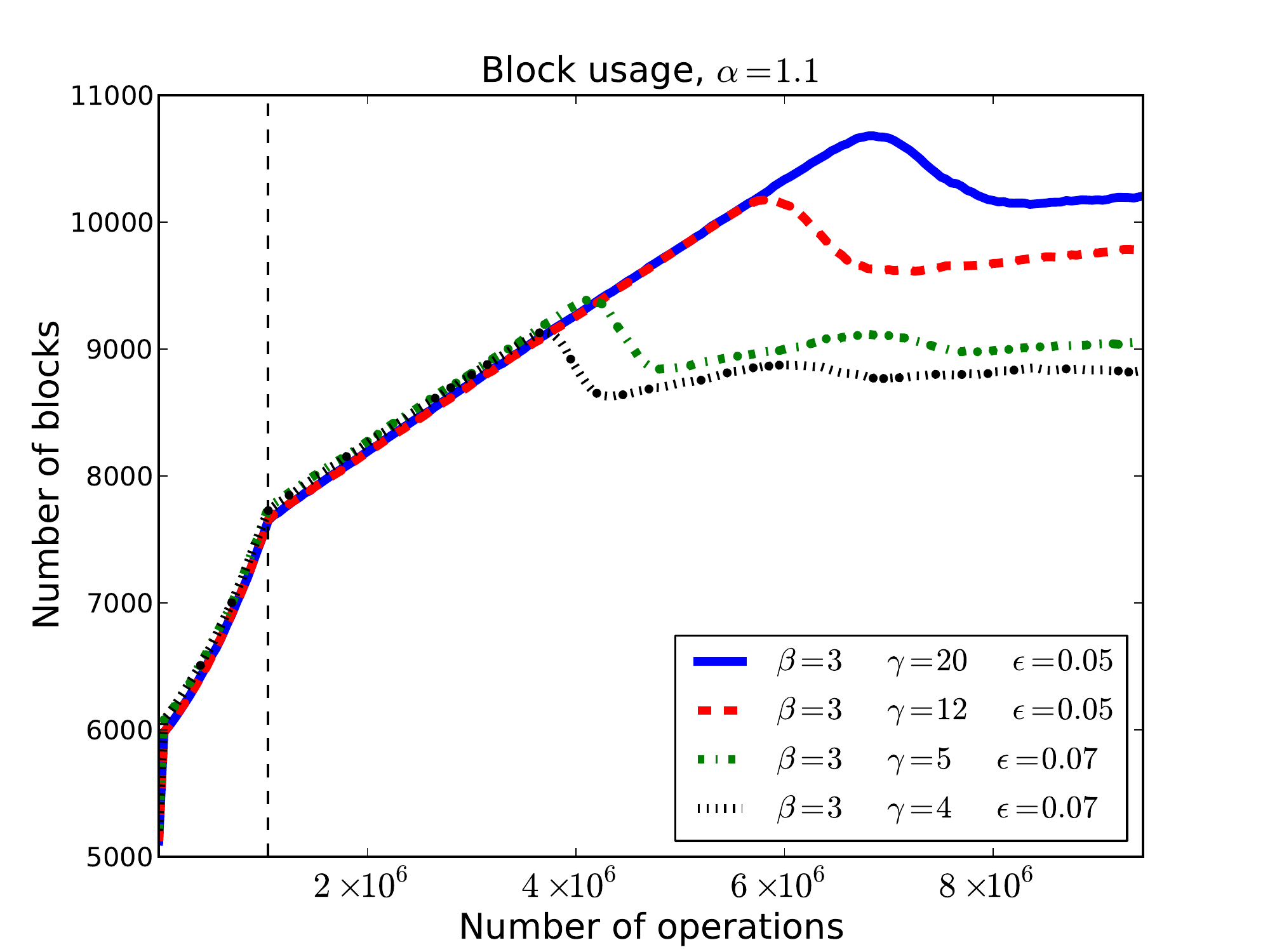}
\label{fig-d-hash2b}
}
\vspace{-0.2cm}
\caption{Results from simulations of an implementation of our deamortized multimap algorithms.}
\end{figure*}

More interesting is the I/O complexity of the deamortized implementation,
shown in Table~\ref{tab2}. 
We see that in stark contrast to the bimodal cost distribution of the basic 
implementation, the deamortized implementation never requires more than a few 
dozen I/Os for any given operation. 
Moreover, even the average I/O complexity of the deamortized implementation is 
significantly better than that of the basic implementation, with an improvement
of at least $0.5$ I/Os per operation, for parameters where we have a direct comparison. 
We attribute much of this improvement to the modified split rule, 
which makes light-to-heavy queue transitions significantly less expensive.
Note that the maximum number of I/Os for any operation in our deamortized experiments
across all parameters, is at most 43 -- an order of magnitude below 
the maximum for our basic algorithm. 


In Table~\ref{tab2}, we also display the breakdown in I/O complexity between 
inserts and remove operations. We see that removes are about twice as expensive
as inserts; this is not unexpected.  An insert requires a look up in $\cal T$, 
followed by loading the header page for the queue $Q$, 
an insert into $\cal D$, and then possibly a split.  
Due to the skewness of our input data, these first two steps can be free, 
as these pages are often already in the cache. 
A remove requires a look up in $\cal D$, followed by
loading the appropriate page in $\cal S$, and then possibly a merge.  
In contrast to inserts, the first two steps are rarely free. 

\begin{table}[!h]
\centering
\small
\subtable[Overall]{
\begin{tabular}{|c|c|c|c|c|c|} 
\hline
$\alpha$ & $\beta$ & $\gamma$ & Mean & Std Dev & Max\\
 & & & I/Os & I/Os & I/Os \\
\hline
0.99 & 3 & 5 & 2.96 & 1.75 & 42 \\ 
0.99 & 3 & 4 & 2.99 & 1.83 & 43 \\ 
\hline
1.10 & 3 & 20 & 2.60 & 1.66 & 41 \\ 
1.10 & 3 & 12 & 2.59 & 1.71 & 42 \\ 
1.10 & 3 & 5 & 2.66 & 1.96 & 42 \\ 
1.10 & 3 & 4 & 2.66 & 2.06 & 43 \\ 
\hline
\end{tabular}
}

\subtable[$\le$ 15 I/Os]{
\begin{tabular}{|c|c|c|c|c|c|}
\hline
$\alpha$ & $\beta$ & $\gamma$ & \% of & Mean & Std Dev \\
 &  &  & Ops &  I/Os &  I/Os \\
\hline
0.99 & 3 & 5 & 99.81 & 2.90 & 1.23 \\ 
0.99 & 3 & 4 & 99.78 & 2.92 & 1.24 \\ 
\hline
1.10 & 3 & 20 & 99.90 & 2.58 & 1.40 \\ 
1.10 & 3 & 12 & 99.88 & 2.56 & 1.39 \\ 
1.10 & 3 & 5 & 99.78 & 2.60 & 1.41 \\ 
1.10 & 3 & 4 & 99.73 & 2.58 & 1.41 \\ 
\hline
\end{tabular}
}
\subtable[$>$ 15 I/Os]{
\begin{tabular}{|c|c|c|c|c|c|}
\hline
$\alpha$ & $\beta$ & $\gamma$ & \% of & Mean & Std Dev \\
 &  &  & Ops &  I/Os &  I/Os \\
\hline
0.99 & 3 & 5 & 0.19 & 31.53 & 3.16 \\ 
0.99 & 3 & 4 & 0.22 & 31.44 & 3.23 \\ 
\hline
1.10 & 3 & 20 & 0.10 & 31.53 & 3.43 \\ 
1.10 & 3 & 12 & 0.12 & 31.38 & 3.26 \\ 
1.10 & 3 & 5 & 0.22 & 31.23 & 3.02 \\ 
1.10 & 3 & 4 & 0.27 & 31.12 & 3.05 \\ 
\hline
\end{tabular}
}

\subtable[Insert operations]{
\begin{tabular}{|c|c|c|c|c|c|}
\hline
$\alpha$ & $\beta$ & $\gamma$ & Mean & Std Dev & Max \\
 &  &  &  I/Os &  I/Os & \\
\hline
0.99 & 3 & 5 & 2.28 & 1.88 & 40 \\ 
0.99 & 3 & 4 & 2.32 & 2.00 & 42 \\ 
\hline
1.10 & 3 & 20 & 1.86 & 1.68 & 41 \\ 
1.10 & 3 & 12 & 1.85 & 1.76 & 42 \\ 
1.10 & 3 & 5 & 1.94 & 2.16 & 42 \\ 
1.10 & 3 & 4 & 1.94 & 2.32 & 41 \\ 
\hline
\end{tabular}
}
\subtable[Remove operations]{
\begin{tabular}{|c|c|c|c|c|c|}
\hline
$\alpha$ & $\beta$ & $\gamma$ & Mean & Std Dev & Max \\
 &  &  &  I/Os &  I/Os & \\
\hline
0.99 & 3 & 5 & 3.80 & 1.09 & 42 \\ 
0.99 & 3 & 4 & 3.82 & 1.12 & 43 \\ 
\hline
1.10 & 3 & 20 & 3.53 & 1.08 & 40 \\ 
1.10 & 3 & 12 & 3.52 & 1.09 & 42 \\ 
1.10 & 3 & 5 & 3.57 & 1.15 & 42 \\ 
1.10 & 3 & 4 & 3.55 & 1.18 & 43 \\ 
\hline
\end{tabular}
}
\caption{\label{tab2}I/O statistics for the deamortized implementation, for (a) overall, 
(b) for operations requiring up to 15 I/Os, (c) for operations requiring more than 15 I/Os,
(d) for insert operations, and (e) for remove operations.}
\end{table}

We note that we have tested our performance against a working
commerical database product, which places key-value pairs in a hash
table, but moves key-value pairs associated with a key to a B-tree
when the number of values associated with a key becomes large.
Preliminary tests suggest this approach yields a slightly smaller
average number of memory accesses per operation, but in turn runs with
significantly more memory.  We expect that both implementations could
be optimized significantly, and each approach might be preferable in
different circumstances.

\section{Conclusion}
We have described an efficient external-memory
implementation of the multimap ADT, which generalizes the inverted
file data structure that is useful for supporting search engines.
Our methods are based on new expected-time bounds for performing 
updates in block-based cuckoo hash tables as well as an
external-memory multiqueue data structure. In addition to proving 
theoretical bounds on the I/O complexity of our implementation,
we demonstrated experimentally that our data structure is able to trade off 
constant factors in space against the time to
perform operations in well-understood ways.  

One direction for future work is to consider efficient in-memory
algorithms for multimaps, an area that seems to not have been given
significant attention.  Another natural direction 
would be to derive improved high-probability bounds for block-based
cuckoo hash tables.  In particular, improved analysis of random walk
cuckoo hashing in this setting is worthwhile.  These are natural
extensions of open problems in the theory of cuckoo hashing.

\eat{
It is also worth noting that using cuckoo hash tables we can, in
theory, trade off constant factors in space against the time to
perform operations in fairly well-understood ways.  Our multiqueue
description is not currently optimized for space.  It would be
worthwhile to attempt to understand the tradeoffs for this
structure, in order to attempt to minimize the amount of unused space
(not residing in blocks in the free list) at any point.
}

\subsection*{Acknowledgments}
We thank Margo Seltzer for several helpful discussions.

{\raggedright
\bibliographystyle{abbrv}
\bibliography{goodrich,extra2,cuckoo,cuckoo2,range}
}

\end{document}